\newcommand{\link}[4]{
\pgfmathsetmacro\middle{#1 / 2 + #3 / 2}
\draw [very thick] plot [smooth, tension=1.8] coordinates {(#1,#2) (\middle,#4) (#3,#2)} ;
}
\newcommand{\altlink}[4]{
\pgfmathsetmacro\middle{#1 / 2 + #3 / 2}
\draw [very thick, color=red] plot [smooth, tension=1.8] coordinates {(#1,#2-0.5) (\middle,-#4-0.5) (#3,#2-0.5)} ;
}
\newcommand{\linkb}[4]{
\pgfmathsetmacro\middle{#1 / 2 + #3 / 2}
\draw [very thick] plot [smooth, tension=1.8] coordinates {(#1,0) (#2,#4) (#3,0)} ;
}
\newtheorem{observation}[theorem]{Observation}
\newcommand{\designext}{\textsc{RNA Design Extension}\xspace}
\def\3sat{\textsc{3-Sat}\xspace}
\newcommand{\subtree}{subtree\xspace}
\renewcommand\leq\leqslant
\renewcommand\geq\geqslant
\newcommand{\lit}[1]{L\langle #1 \rangle}
\newcommand{\var}[1]{V\langle #1 \rangle}
\newcommand{\cla}[1]{S\langle #1 \rangle}
\newcommand{\litalt}[2]{L_{-#1y}\langle #2 \rangle}
\newif\ifconf
\newcommand{\appendixproof}[2]{#2}
\title{Designing RNA Secondary Structures is Hard}	
\author{\'Edouard Bonnet\inst1 \and Pawe\l{} Rz\k{a}\.zewski \inst2 \and Florian Sikora \inst{3}}
\institute{Middlesex University, Department of Computer Science, London, UK \email{edouard.bonnet@dauphine.fr} \and
Faculty of Mathematics and Information Science,		Warsaw University of Technology, Warsaw, Poland \email{p.rzazewski@mini.pw.edu.pl} \and
Universit\'{e} Paris-Dauphine, PSL Research University, CNRS, LAMSADE, Paris, France \email{florian.sikora@dauphine.fr}
}
\date{}
\begin{document}

\conffalse

\maketitle

\begin{abstract}
  An RNA sequence is a word over an alphabet on four elements $\{A,C,G,U\}$ called bases.
  RNA sequences fold into secondary structures where some bases pair with one another while others remain unpaired.
  Pseudoknot-free secondary structures can be represented as well-parenthesized expressions with additional dots, where pairs of matching parentheses symbolize paired bases and dots, unpaired bases.
  The two fundamental problems in RNA algorithmic are to \emph{predict} how sequences fold within some model of energy and to \emph{design} sequences of bases which will fold into targeted secondary structures.
  Predicting how a given RNA sequence folds into a pseudoknot-free secondary structure is known to be solvable in cubic time since the eighties and in truly subcubic time by a recent result of Bringmann et al. (FOCS 2016), whereas Lyngs\o{} has shown it is NP-complete if pseudoknots are allowed (ICALP 2004).
As a stark contrast, it is unknown whether or not designing a given RNA secondary structure is a tractable task; this has been raised as a challenging open question by Anne Condon (ICALP 2003).
Because of its crucial importance in a number of fields such as pharmaceutical research and biochemistry, there are dozens of heuristics and software libraries dedicated to RNA secondary structure design.
It is therefore rather surprising that the computational complexity of this central problem in bioinformatics has been unsettled for decades. 

In this paper we show that, in the simplest model of energy which is the Watson-Crick model
the design of secondary structures is NP-complete if one adds natural constraints of the form: \emph{index $i$ of the sequence has to be labeled by base $b$}.
This negative result suggests that the same lower bound holds for more realistic models of energy. 
It is noteworthy that the additional constraints are by no means artificial: they are provided by all the RNA design pieces of software and they do correspond to the actual practice (see for example the instances of the EteRNA project).
Our reduction from a variant of \textsc{3-Sat} has as main ingredients: arches of parentheses of different widths, a linear order interleaving variables and clauses, and an intended \emph{rematching strategy} which increases the number of pairs iff the three literals of a same clause are false.
The correctness of the construction is also quite intricate; it relies on the polynomial algorithm for the design of saturated structures -- secondary structures without dots -- by Hale{\v s} et al. (Algorithmica 2016), counting arguments, and a concise case analysis.
\end{abstract}

\section{Introduction}\label{sec:intro}
Ribonucleic acid (RNA) is a molecule playing an important role besides deoxyribonucleic acid (DNA) and proteins.
RNA is a chain of nucleotides (or bases) and can be represented as a sequence on a 4-letter alphabet: $A, U, C, G$; denoting the first letter of the corresponding base.
Unlike DNA, RNA is single stranded, and \emph{folds} into itself: some of the bases are linked to each other (they are \emph{paired} or \emph{matched}) to form a stable and compact structure.
This pairing forms the \emph{secondary structure} of the RNA molecule; the primary structure is the sequence of nucleotides and the tertiary structure is the 3D shape.
Predicting how an RNA molecule folds is vital to understand its biological function.

Experiments reveal that the secondary structure of an RNA strand tends to follow the laws of thermodynamics.
Given a model associating a free-energy value to secondary structures, it is widely accepted, since the pioneer work of the chemistry Nobel laureate Christian B. Anfinsen~\cite{Anfinsen1973}, that the secondary structure of a sequence can be predicted as the one with the minimum free-energy (MFE), i.e., the one ensuring the greatest stability. 
The most simple energy model is the Watson-Crick model, allowing $A$ to pair with $U$ and $C$ to pair with $G$ (it also can be seen as the Nussinov-Jacobsen model using only AU and GC base pairs).
In this model, the MFE is simply realized by a structure with the greatest number of pairs. 

\paragraph*{RNA folding.}
A stem-loop or hairpin loop is a building block of RNA secondary structures.
It consists of a series of consecutive base pairs (called double-helix or stackings) ending in a loop of unpaired nucleotides.
A pseudoknot occurs when some nucleotides of this loop pair somewhere else in the RNA strand.
Pseudoknot-free secondary structures correspond to well nested structures.
They can be represented as a well-parenthesized expression where matching parentheses symbolize base pairs, with additional dots to symbolize unpaired nucleotides.

Given a sequence of nucleotides, the \textsc{RNA Folding} problem consists of finding the pseudoknot-free secondary structure with the minimum free-energy.
\textsc{RNA Folding} can be solved by a simple dynamic programming in time $O(n^3)$ where $n$ is the size of the sequence~\cite{Nussinov1980,Zuker1981}.
Since this result in the early 1980s, a lot of work has been devoted to propose new methods for secondary structure prediction.
Recently, the first truly subcubic algorithm for \textsc{RNA Folding} was proposed by Bringmann et al.~\cite{Bringmann16} and runs in deterministic time $O(n^{2.861})$ or randomized time $O(n^{2.825})$.
Some faster polynomial-time \emph{approximation} algorithms were later obtained~\cite{Saha17}.

When pseudoknots are allowed, the computational complexity of predicting how RNA folds gets a bit blurry.
The short answer would be to say that the folding prediction becomes NP-complete~\cite{LyngsoP00,Akutsu00,Ieong03}.
Observing that none of those three hardness constructions are ideal, the first one because the value of the free-energy is not fixed but specified as part of the input, and the other two, because they assume that only planar pseudoknots are legal, Lyngs\o{} gives two additional NP-hardness proofs \cite{Lyngso04}.
They work in seemingly very close models; both not too distant from Watson-Crick.
However, the NP-hardness in one model crucially needs that the alphabet size is unbounded (which is rather unsatisfactory), while the NP-hardness in the other model carries over to a binary alphabet.
When only restricted types of pseudoknots are allowed, dynamic programming still works and yields polynomial-time algorithms with worse running times than in the pseudoknot-free case \cite{Rivas99,LyngsoP00,Akutsu00,Chen09}.
Under the hypothesis that the pseudoknots may only form after the pseudoknot-free pairs, the $O(n^3)$-time complexity can be attained again~\cite{Jabbari08}.
In a simpler model, an approach based on maximum weighted matchings makes the folding prediction tractable for general pseudoknots~\cite{Tabaska98}.

\paragraph*{RNA design.}
In the inverse folding problem called \textsc{RNA Design}, one is given a secondary structure and has to find a sequence of bases which \emph{uniquely} folds into this structure; or report that such a sequence does not exist. 
The sequence must fold into this structure instead of any other structure.
In particular, in the Watson-Crick model, any other structure the sequence can fold into must have strictly fewer pairs.
If such a sequence exists, we call it a \emph{design} for the secondary structure.

This problem was introduced in the early 1990s in a paper which, to date, has over 2000 citations~\cite{Hofacker1994}.
The motivation to study this problem comes from the fact that the functions performed by particular RNA sequences are strongly influenced by the secondary structures these sequences fold into. Thus an important step towards designing RNA sequences that perform given functions is to be able to design sequences that fold into given secondary structures~\cite{ANDERSONLEE2016748}.
Surprisingly, the complexity of \textsc{RNA Design} is still unknown despite two decades of works and was explicitly stated as a major open problem~\cite{ANDRONESCU2004607,Condon03,Condon2006,hales:hal-01285499,DBLP:journals/corr/abs-1709-08088,Lyngso}. 
This is exceptional for a central problem in computational biology~\cite{hales:hal-01285499}.
Schnall-Levin et al. gave a NP-hardness proof for a more general problem~\cite{Schnall-LevinCB08}. 
However, to be applicable to \textsc{RNA Design}, the energy model would have to depend on the \textsc{3-Sat} instance in the reduction (hence, would be different for each instance) which is clearly not realistic (see the discussions in~\cite[Section 5]{hales:hal-01285499} or in~\cite{Lyngso}).

Solving \textsc{RNA Design} finds applications in multiple fields such as pharmaceutical research and biochemistry~\cite{hales:hal-01285499} as well as synthetic biology and RNA nanostructures \cite{Churkin2017}; the two latter areas aim at creating enhanced RNA with desirable properties.
It is also a major step towards functional RNA molecular design.
Therefore, there are many algorithms and software products\footnote{The following wikipedia page already references more than a dozen \url{https://en.wikipedia.org/wiki/List_of_RNA_structure_prediction_software\#Inverse_folding.2C_RNA_design}, last access: \today} solving the RNA inverse folding problem \cite{DBLP:journals/nar/Garcia-MartinCD13,Aguirre-Hernandez2007,ANDRONESCU2004607,Butterfoss2006ComputerbasedDO}.
Churkin et al. compare the main freewares solving \textsc{RNA Design} such as \texttt{RNAInverse}, \texttt{antaRNA}, and \texttt{RNAiFold} \cite{Churkin2017}.
All of them are either heuristics, or use meta-heuristics, or have an exponential running time.
Let us also mention the EteRNA project, an online game where (more than 100.000) players have to find a correct sequence given a structure~\cite{Lee14}.

Recently, Hale\v{s} et al. gave some sufficient conditions under which one can answer to the problem in polynomial time~\cite{hales:hal-01285499}.
Their main result is to show that if the structure is saturated, i.e., does not contain any unpaired letter, then a design --if it exists-- can be found in linear time by a greedy procedure.
In the case of saturated structures, the existence of a design is solely based on the maximum degree of a tree representing the structure. 
The authors also show that on smaller alphabets and general secondary structures, \textsc{RNA Design} is tractable.
This line of research was later continued by Jedwab et al.~\cite{DBLP:journals/corr/abs-1709-08088}. 
The authors presented an infinite family of designable structures containing unpaired letters. 
Again, the characterization of these structures is given in terms of trees. 

Following the name of the precoloring extension problem in graphs~\cite{BiroHT92}, let the \emph{extension} version (\designext) of the inverse folding problem be the same as \textsc{RNA Design} with the additional constraint that some indices of the RNA sequence should contain a specified base.
Lyngs\o{} observes that this assumption is biologically coherent: ``\emph{Most recent methods do allow for position specific constraints, where in addition to folding into the target structure the designed sequence is also required in certain positions to have a particular nucleotide}''~\cite{Lyngso}.
Indeed, in addition to the target structure, one has to force some bases at key positions to ensure that the RNA molecule possesses a given function.
Zhou et al. also propose a method to solve \textsc{RNA Design} where some positions within the sequence are constrained to certain bases~\cite{ZhouPVWZD13}.
Rodrigo et al. impose the presence of a certain sequence at a specific position in the structure~\cite{rodrigo2012}.
Borujeni et al. enforce the presence of a given subsequence (called the Shine-Dalgarno sequence) paired to a ``start codon'' to start the translation of RNA to proteins~\cite{borujeni2016}.
Furthermore, software libraries solving \textsc{RNA Design} allow those additional unary constraints and the instances of the EteRNA project contain immutable nucleotides.
Thus it appears that the design of RNA secondary structures is better captured by \designext than its restriction \textsc{RNA Design}.

\medskip

In this paper, 
we show that \designext is NP-hard in the simple Watson-Crick model of energy, suggesting that the same bound holds for more realistic energy models.

\paragraph*{Ideas of the reduction.}
The main reason the complexity of designing RNA secondary structures has been open for about twenty years is that it is difficult to create challenging structures for which the intended sequence will not fold into an undesired better structure; let alone to actually prove it.
It is considerably easier to exhibit an alternative better structure for a bad sequence.
Indeed, in the former case, one needs to argue over \emph{all} the structures compatible with the sequence, while in the latter, one just needs to find \emph{one} particular structure.
With that in mind, YES-instances of the starting NP-hard problem will be much more problematic to deal with than the NO-instances.

Our reduction is from \textsc{E3-SAT} (where all the clauses have exactly three literals).
Each clause gadget contains some unpaired bases and is surrounded by an \emph{arch} of nested parentheses.
The number of unpaired bases and the width of this arch are set so that if the three literals of the clause are unsatisfied (i.e. false), one can obtain a better structure by deleting the arch and matching the previously unpaired bases with other previously unpaired bases in the corresponding variable gadgets.
However, if only at most two literals of the clause are unsatisfied this rematching strategy ends up with a worst structure. 

To combat any improving rematching strategy for the sequences we want to interpret as satisfying assignments, we use arches of increasing widths to represent the variables.
This allows a simple counting argument to significantly prune the set of undesired rematchings.
Another key technical ingredient is to display the variable and clause gadgets interleaved in a carefully chosen order.
Interestingly, we also make use of the fact that saturated structures can be efficiently designed~\cite{hales:hal-01285499} in the correctness of our reduction.

\paragraph*{Robustness of the reduction.}
As established in the Watson-Crick energy model, our hardness result enjoys the following healthy properties.
We only need a 4-letter alphabet for the sequences, which naturally corresponds to the four nucleotides $A, U, C, G$.
This is optimal in light of the paper by Hale\v{s} et al. \cite{hales:hal-01285499} where the authors show the tractability of designing RNA secondary structures on an alphabet of size at most 3.
The way free-energy is computed is fixed (it can be thought as $-1$ for each base pair); hence, it is not part of the input and cannot be used to artificially encode a hard task.
We do not need pseudoknots --which make the folding prediction intractable-- to obtain the hardness. 
Watson-Crick being the simplest model, our result strongly suggests that RNA secondary structure design is hard in more authoritative models.

Let also note that the structures produced by our reduction are reasonably \emph{realistic}.  
They contain as building blocks nested parentheses surrounding some dots.
Interestingly, this structure is, as we mentioned, known as a stem-loop which is itself a building block of RNA structures.
Finally, we believe that the ideas developed in the reduction can be adapted to fit other energy models and will prove useful to show NP-hardness even when no element of the sequence is constrained to be a specified nucleotide.

\paragraph*{Organization.}
The rest of the paper is organized as follows.
In Section~\ref{sec:prelim}, we formally introduce all the required notions and define the problem \textsc{RNA Design (Extension)}.
In Section~\ref{sec:hardness}, we show our main contribution: even in the very simple Watson-Crick model, designing RNA secondary structures is NP-hard if the input structure comes with imposed bases at some specific positions.
In short, \designext is NP-hard.
In Section~\ref{sec:other}, we give simple algorithms with a complexity better than the brute-force for \textsc{RNA Design (Extension)}.
\ifconf Due to space constraints, some proofs (marked with a $\bigstar$) are in the appendix. \fi











\section{Preliminaries}\label{sec:prelim}
For a positive integer $n$, we denote by $[n]$ the set $\{1,2,\ldots,n\}$ of positive integers no greater than $n$. 
Given a word $w$ of length $n$ over an alphabet $\Sigma$, $w[i] \in \Sigma$ denotes the $i$-th letter of $w$ (for $i \in [n]$).

\paragraph*{Sequences and extensions.}
An RNA sequence is a word over the set of bases $\{A,C,G,U\}$.
A \emph{sequence} is a word over $\{1,2,3,4\}$, where $1$ represents $A$, $4$ represents $U$, $2$ represents $C$, and $3$ represents $G$.
This way, two letters can be \emph{paired} if they sum up to $5$.
We call \emph{base} an element of $\{1,2,3,4\}$.
A \emph{partial sequence} is a word over $\{1,2,3,4,?\}$.
An \emph{extension} of a partial sequence $w$ is a sequence $w'$ of the same length $n$ such that $\forall i \in [n]$, if $w[i] \neq \ ?$ then $w[i]=w'[i]$, and if $w[i] = \ ?$ then $w'[i] \in \{1,2,3,4\}$.

\paragraph*{Secondary structures.}
A \emph{pseudoknot-free secondary structure} (or \emph{structure} for short) is any word over the alphabet $\{(,),.\}$ such that if one removes all the $.$, the remaining word is a well parenthesized expression.
In what follows, we will always omit the adjective \emph{pseudoknot-free}. 
A well-parenthesized expression (or member of the Dyck language) is a word with the same number of $($ and $)$, and such that no prefix of the word have more $)$ than $($.
We call \emph{letter} an element of $\{(,),.\}$.
We refer to $.$ as an \emph{unpaired letter}, or an \emph{unmatched letter}, or simply a \emph{dot}, as opposed to $($ and $)$, which are \emph{paired}.
A structure is \emph{saturated} if it does \emph{not} contain any unpaired letter.

\paragraph*{Designs.}
In a well-parenthesized expression $E$, an opening parenthesis at index $i$ is said to be \emph{matched to} a closing parenthesis at index $j$ if $j$ is the smallest index to satisfy $j>i$ and that the multiset $\{E[i+1],E[i+2], \ldots, E[j-2],E[j-1]\}$ contains the same number of opening and closing parentheses.
We extend this definition to structures by ignoring the unpaired letters. 
A structure $S$ is \emph{compatible} with a sequence $w$, if they have the same length and for any indices $i<j \in [n]$ such that $S[i] = ($ is matched to $S[j]= \ )$, then $w[i]$ and $w[j]$ can be paired (i.e., $\{w[i],w[j]\} \in \{\{1,4\},\{2,3\}\}$).
A sequence $w$ is a \emph{design} for a structure $S$ if $S$ is compatible with $w$ and every other structure $S'$ compatible with $w$ has strictly more unpaired letters.
A partial sequence $w$ \emph{can be extended} to a design of $S$ if there is an extension $w'$ of $w$ which is a design for $S$.
We also say that a (partial) sequence $w$ \emph{labels} an index $i$ (or, by a slight abuse of language, \emph{a letter $l := S[i]$}) of a structure $S$ of the same size with (or \emph{by}) a base $b \in \{1,2,3,4,?\}$ if $w[i] = b$.

In \designext, one is given a structure $S$ and a partial sequence $w$ of the same length.
The goal is to decide if $w$ can be extended to a design for $S$.
The \textsc{RNA Design} problem can be seen as the special case when the partial sequence $w$ only contains $?$ symbols.
In words, no index of the structure $S$ is constrained to be labeled by a specific base of $\{1,2,3,4\}$.
In the introduction, we argued that \designext is perhaps more natural than its restriction \textsc{RNA Design}.

\begin{example}
$w=214??1?1343$ is a partial sequence.
$w'=21423121343$ is an extension of $w$.
$S=(()()(( . )))$ is a structure (since $(()()(()))$ is a well-parenthesized expression).
$S$ is compatible with $w'$.
However, $w'$ is \emph{not} a design for $S$ since $S'=(((())( . )))$ is also compatible with $w'$ and has the same number of unpaired letters (only one).
Actually, $w$ cannot be extended to a design of $S$ since, none of $\{21414121343,21441121343,21432121343\}$ is a design for $S$.
Observe that, in order to be compatible with $S$, the two first $?$ in $w$ have to get bases that can be paired while the third $?$ should be a $2$ to be paired with the following $3$.
The sequence $11423312424$ is a design for $S$.
\end{example}


\paragraph*{Tree representation of a structure.}
A structure $S$ can be seen as a rooted tree $T$, whose nodes are either pairs of matching parentheses (we call such nodes {\em paired}), or unmatched letters (we call them {\em unpaired}). The parent-child relation is defined by nestedness of parentheses/unmatched letters. Following Hale\v{s} {\em et al.} \cite{hales:hal-01285499}, for convenience we also add a special node which is a {\em virtual root} of $T$. 
Its role is to simplify working with structures which are not surrounded by parentheses.
Note that the children of each node are ordered and an unpaired node is always a leaf.

Every substructure of $S$, defined by a subtree of $T$ is itself called a \emph{subtree}.
Observe that \emph{not} every subword of $S$ is a subtree.
The {\em degree} of a node in $T$ is the number of its neighbors, excluding the unpaired ones (note that we count the parent of a node as a neighbor). 
Finally, by the \emph{degree} of a structure we mean the maximum degree of a node in its tree.

\section{Hardness of \designext}\label{sec:hardness}
The following lemma is intuitive and straightforward to prove.
We will use it repeatedly in order to prove our main theorem.
It says that a design induces designs in all the subtrees of the structure.

\begin{lemma}\ifconf[$\bigstar$]\fi\label{lem:substructure}
A design $w$ for a structure $S$ labels every \subtree of $S$ with a design.
\end{lemma}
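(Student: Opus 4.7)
The plan is to proceed by contradiction. Fix a \subtree $S'$ of $S$; since the \subtree is defined by a subtree of the parenthesization tree $T$ of $S$, it occupies a contiguous index range $[a,b]$ in $S$, bounded either by a matched pair of outer parentheses, by a single unpaired position, or by the extremes of $S$ itself (when the \subtree is rooted at the virtual root). Let $w'$ denote the restriction of $w$ to $[a,b]$. Clearly $S'$ is compatible with $w'$, since $S$ is compatible with $w$ and every base pair of $S'$ is also a base pair of $S$. Suppose for contradiction that $w'$ is not a design for $S'$. Then, by definition of a design, there exists a structure $S'' \neq S'$ of length $b-a+1$ that is compatible with $w'$ and has at most as many unpaired letters as $S'$.

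I would then form a new structure $\widetilde{S}$ on the same index set as $S$ by replacing the subword $S[a \ldots b]$ with $S''$, leaving all other positions of $S$ unchanged. The key point is that $\widetilde{S}$ is still a valid pseudoknot-free structure: because the \subtree $S'$ sits in $S$ as a syntactically self-contained block (either an outermost matched pair enclosing $[a+1,b-1]$, a single dot, or the whole of $S$), no parenthesis of $S$ outside $[a,b]$ is matched to one inside $[a,b]$, so replacing the inside by another well-formed structure of the same length preserves well-parenthesization.

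Finally, I would verify that $\widetilde{S}$ is compatible with $w$: outside $[a,b]$ the pairs of $\widetilde{S}$ coincide with those of $S$, which are compatible with $w$ by assumption, and inside $[a,b]$ the pairs are those of $S''$, which is compatible with $w'$. Counting unpaired letters, $\widetilde{S}$ has at most as many as $S$, since the only positions affected lie in $[a,b]$ and $S''$ has no more unpaired letters than $S'$. Since $S'' \neq S'$ we also have $\widetilde{S} \neq S$. This contradicts the fact that $w$ is a design for $S$, which requires every other compatible structure to have \emph{strictly} more unpaired letters.

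The only genuinely delicate step is the structural one in the middle paragraph: one must be careful to identify a \subtree with a contiguous, boundary-respecting subword so that the substitution cannot break any matching pair of $S$ lying outside it. Everything else is a direct unfolding of the definitions of \emph{compatible} and \emph{design}.
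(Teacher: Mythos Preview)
Your proof is correct and follows essentially the same approach as the paper's: assume some subtree is not labeled by a design, take an alternative structure for that subtree with at least as many pairs, substitute it back into $S$, and obtain a distinct compatible structure with at least as many pairs, contradicting that $w$ is a design. You simply spell out more carefully why the substitution yields a valid structure and is compatible with $w$, which the paper leaves implicit.
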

\appendixproof{Lemma~\ref{lem:substructure}}
{
\begin{proof}
Assume that there is a \subtree $T$ of $S$ which is labeled by $w'$ and $w'$ is not a design.
Let $T'$ be a second structure compatible with $w'$, having at least as many paired letters as $T$.
The structure $S'$ obtained by replacing in $S$ the \subtree $T$ by $T'$ is another structure compatible with $w$ and having at least as many paired letters as $S$; a contradiction.
\end{proof}
}

We show the main result of the paper.
A first glimpse of the construction may consist of reading the dedicated paragraph in the introduction and going through Figure~\ref{fig:var-lit-gadget} to \ref{fig:no-variable-to-variable}.

\begin{theorem}\label{thm:main}
\designext is NP-complete. 
\end{theorem}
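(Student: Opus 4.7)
The plan is a polynomial-time reduction from \textsc{E3-Sat} to \designext together with an argument that the problem lies in NP. Membership in NP follows by taking a candidate extension $w'$ as the witness and verifying in polynomial time, via a variant of the Nussinov--Jacobson cubic dynamic program, both that the number of pairs in $S$ equals the maximum number of pairs over all structures compatible with $w'$, and that $S$ is the unique such maximizer.

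For the hardness direction, I would construct from a formula $\varphi$ with clauses $C_1, \dots, C_m$ on variables $x_1, \dots, x_n$ a pair $(S, w)$ as follows. Each variable $x_i$ is represented by a variable gadget: an arch of nested parentheses whose width grows with $i$, enclosing two ``literal blocks'' corresponding to $x_i$ and $\bar x_i$, with bases prescribed by $w$ so that each literal block has a unique intended pairing partner inside its own variable gadget. Each clause $C_j$ is represented by a short arch enclosing three dots, one per literal of $C_j$; the dots are labelled by $w$ so as to be compatible only with designated positions in the literal blocks of the three corresponding variables. The variable and clause gadgets are then interleaved along the sequence in a carefully chosen linear order. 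The intended invariant is that any alternative competitive structure must, clause by clause, dissolve the clause arch and rematch its three dots to the corresponding literal blocks, and that such a rematching strictly increases the number of pairs if and only if the three literals it touches are all false under the assignment encoded by the extension.

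Correctness then splits into two directions. For the \no-direction, given any extension encoding an unsatisfying assignment, one exhibits an explicit improving alternative by applying the clause-dissolving rematching at an unsatisfied clause. The \yes-direction is the main obstacle: one must show that for the intended extension coming from a satisfying assignment, every compatible structure other than $S$ has strictly fewer pairs. Here I would invoke Lemma~\ref{lem:substructure} to reduce to arguing about designs of each subtree in isolation, use the linear-time design procedure of Hale{\v s} et al.\ to handle saturated subgadgets plugged into the global sequence, and then run a case analysis over the possible interactions of an alternative structure with clause and variable gadgets. The strictly increasing widths of the variable arches supply the essential counting inequality preventing any profitable rearrangement from spanning several variable gadgets at once, while the fixed bases in $w$ rule out undesired pairings between mismatched literal blocks and so reduce the remaining cases to the clause-by-clause rematching already analyzed in the \no-direction.
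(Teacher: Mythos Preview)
Your proposal has the right high-level shape --- reduce from \textsc{E3-Sat}, use arches of increasing width for variables, interleave gadgets, and argue the \yes-direction via Lemma~\ref{lem:substructure}, the saturated-design result of Hale{\v s} et al., and counting --- but the concrete construction you sketch diverges from the paper's in ways that break the reduction.

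The first and most serious gap is that you never say where the unlabeled positions (the $?$'s) are, nor how an extension encodes a truth assignment. You prescribe bases for both the literal blocks inside the variable gadgets and the three dots inside each clause gadget; with everything labeled, there is nothing to extend. In the paper the situation is inverted: the \emph{variable} gadget $\var{x_i}$ is an arch over $t$ \emph{unlabeled} dots, and the choice of all-$2$ versus all-$3$ for these dots is what encodes $x_i$'s value (Lemma~\ref{lem:var-gadget}); the \emph{clause} gadget contains three full literal gadgets, each a labeled copy of a variable gadget with its $t$ dots fixed to $2$ or $3$ according to the literal's polarity.

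Second, even granting some encoding mechanism, your clause gadget with only three dots cannot support the key arithmetic. The rematching must strictly improve when all three literals are false and strictly worsen otherwise, so the clause-arch cost must lie strictly between $2$ and $3$ gained pairs --- impossible with integer pair counts. The paper avoids this by working at scale $t = n^2$: each variable--literal rematching gains $t$ pairs, the clause arch costs $q = 3t - 10(n+m)$, and the $O(n)$ slack absorbs the binary-tree and constant overheads. Relatedly, two literal blocks per variable are not enough: a variable may occur in several clauses, and the paper needs one literal gadget per occurrence (placed inside the corresponding clause gadget, with its own arch carefully sized to interact with the surrounding clause arch).

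Finally, the paper's \yes-direction is more delicate than a subtree-by-subtree case analysis. The crucial steps are the imbalance lemma (Lemma~\ref{lem:imbalance}) together with the precise arch widths $i(m+1)y$ and $jy$, which force any dot in $D_i$ to rematch only within $D_i$ (Lemma~\ref{lem:xi_to_xi}) and confine the second layer $\mathcal A(q)_j$ (Lemma~\ref{lem:arch-A(q)}). These counting arguments are what rule out profitable cross-gadget rematchings; your ``increasing widths'' idea is the right intuition, but without the two-layer clause arch and the $jy$ offsets on the literal arches inside $\cla{C_j}$, the imbalance bounds would not separate the cases.
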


\begin{proof}
  \designext is in NP because the polynomial dynamic programming algorithm to solve \textsc{RNA Folding} in the papers \cite{Nussinov1980,Zuker1981} can be slightly adapted to test the uniqueness of the maximally matched structure.
  Therefore, one can guess (certificate of polynomial size) an extension of the partial sequence into a design (if such an extension exists) and check it with the tuned dynamic programming.
  
We reduce from the NP-hard problem \textsc{E3-Sat}, which is a variant of \textsc{3-Sat}, in which all clauses have exactly three distinct literals.
This problem remains NP-hard when each variable appears at most four times \cite{Tovey84}.
Let $\mathcal I = (X=\{x_1,~\cdots,x_n\}, \mathcal C=\{C_1,~\cdots,C_m\})$ be such an instance with $n$ variables and thus $m=\Theta(n)$ $3$-clauses.
We will build an equivalent instance $\mathcal J=(S,w)$ of \designext with a structure $S$ and a partial sequence $w$ both of length $N=\Theta(n^6)$.

Let $t := n^2$ and $y := (n+3m)t$.
In the structure $S$, for every variable and every literal we will introduce a gadget containing $t$ consecutive unpaired letters.
There will not be any other unpaired letter in $S$.
Hence $y = (n+3m)t$ represents the overall number of unpaired letters.
It might be useful to keep in mind that $n+m=\Theta(n) \ll t=\Theta(n^2) \ll y=\Theta(n^3)$.
For all the inequalities in the proof to hold, we assume that $n, m$ are greater that some large constant, say $1000$.
\paragraph*{Variable gadget.}
The gadget encoding a variable  $x_i$ is defined as follows:
$$\var{x_i} :=  \underbrace{((((((((((((((((}_{\text{length}~i(m+1)y} \underbrace{.........}_{\text{length}~t} \underbrace{))))))))))))))))}_{\text{length}~i(m+1)y} $$
where the opening parentheses are labeled by 1, the closing parentheses are labeled by 4 (see Figure~\ref{fig:var-gadget}).
We will refer to those parentheses as \emph{the arch of $x_i$}.
The next lemma indicates how the dots can be labeled in the variable gadgets.

A \emph{potential solution} is an extension of $w$ whose restriction to the subtree corresponding to one variable gadget is a design.
By Lemma~\ref{lem:substructure}, we know that a solution to the \designext instance $(S,w)$ has to be a potential solution.
\begin{lemma}\ifconf[$\bigstar$]\fi\label{lem:var-gadget}
In a potential solution, the dots in $\var{x_i}$ all receive label 2, or all receive label 3.
\end{lemma}
\appendixproof{Lemma~\ref{lem:var-gadget}}
{
\begin{proof}
Indeed, if one dot is labeled by 1 (resp. 4), then there would be a distinct structure, with the same number of pairs, that matches this dot to the first closing parenthesis (resp. to the last opening parenthesis).
Consider now the case where one dot is labeled by 2 and another dot is labeled by 3.
Then those two dots can be matched\footnote{By that slight abuse of language, we mean that the letters labeling those dots in one structure can be matched together to form a new structure (with more pairs). Let us also recall that we use the words \emph{matched} and \emph{paired} interchangeably.} together, yielding a structure with strictly more pairs.
\end{proof}
}

We interpret labeling all the dots of $\var{x_i}$ by $2$ to setting $x_i$ to true, and labeling all the dots by $3$ to setting $x_i$ to false. 
The dots in the variable gadgets will be the only letters of the structure $S$ which are not originally labeled by $w$.

\paragraph*{Clause gadget.}
In the clause gadgets, the structure is entirely labeled by $w$. 
Consider a 3-clause $C_j = \ell_a \lor \ell_b \lor \ell_c$ with $a < b < c$ and $\ell_i \in \{x_i,\neg x_i\}$ (for $i \in \{a,b,c\}$).
We define a \emph{literal gadget} for each literal of $C_j$.
For $\ell_a$ this gadget is denoted by $\lit{\ell_a}$ and is the same as $\var{x_a}$, where all the opening parentheses are labeled by 1, all the closing parentheses, by 4, and the dots are labeled by $2$ if the literal is positive and by $3$ if it is negative (see Figure~\ref{fig:lit-pos-gadget} and Figure \ref{fig:lit-neg-gadget}).
For $\ell_i \in \{\ell_b,\ell_c\}$, the literal gadget is denoted by $\litalt{j}{\ell_i}$ and is obtained from $\lit{\ell_i}$ by removal of $jy$ pairs of parentheses in the surrounding arch; so their number is only $(b(m+1)-j)y$ in $\litalt{j}{\ell_b}$ and $(c(m+1)-j)y$ in $\litalt{j}{\ell_c}$.

The whole clause $C_j$ is encoded by the clause gadget:
$$\cla{C_j} := \underbrace{\textbf{(\ldots(}}_{jy}\underbrace{\textbf{(\ldots(}}_{q} ((\litalt{j}{\ell_b})(~\underbrace{\textbf{(\ldots(}}_{jy}(\lit{\ell_a})\underbrace{\textbf{)\ldots)}}_{jy}))((\litalt{j}{\ell_c}))\underbrace{\textbf{)\ldots)}}_{q}\underbrace{\textbf{)\ldots)}}_{jy}$$
with $q := 3t-10(n+m)$. 
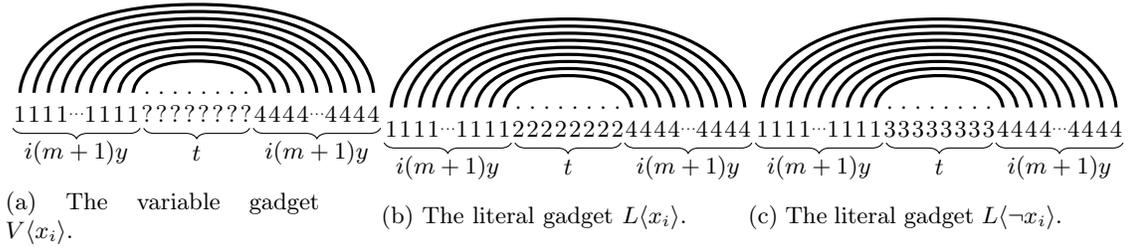
\begin{figure}
  \centering
  \begin{minipage}{0.28\textwidth}
    \centering
    \begin{tikzpicture}[scale=0.94]
      \def\r{0.2}
            \foreach \i in {0,1,...,3,5,6,...,8}{
      	\pgfmathsetmacro\j{5 - \i * \r}
      	\link{\i * \r}{0}{\j}{1.25 - 0.5 * \i * \r}
              \node at (\i * \r, -0.3) {$1$} ;
              \node at (\j, -0.3) {$4$} ;
            }
            \foreach \i in {4}{
      	\pgfmathsetmacro\j{5 - \i * \r}
      	\link{\i * \r}{0}{\j}{1.25 - 0.5 * \i * \r}
              \node at (\i * \r, -0.3) {\tiny{...}} ;
              \node at (\j, -0.3) {\tiny{...}} ;
            }
            \foreach \i in {9,...,16}{
            \node at (\i * \r, 0) {$.$} ;
            }
            \draw [decorate,decoration={brace,amplitude=4pt}] (3.25,-0.5) -- (1.75,-0.5) ;
            \node at (2.5,-0.85) {$t$} ;
           
            \foreach \i in {-0.2,3.2}{
              \begin{scope}[xshift=\i cm]
                \draw [decorate,decoration={brace,amplitude=4pt}] (1.9,-0.5) -- (0.1,-0.5) ;
                \node at (1,-0.85) {$i(m+1)y$} ;
              \end{scope}
            }
      
      \foreach \i in {9,...,16}{
        \node at (\i * \r, -0.3) {$?$} ;
      }
     \end{tikzpicture}
    \subcaption{The variable gadget $\var{x_i}$.}
    \label{fig:var-gadget}
  \end{minipage}
  \qquad
    \begin{minipage}{0.28\textwidth}
      \centering
      \begin{tikzpicture}[scale=0.94]
        \def\r{0.2}
              \foreach \i in {0,1,...,3,5,6,...,8}{
        	\pgfmathsetmacro\j{5 - \i * \r}
        	\link{\i * \r}{0}{\j}{1.25 - 0.5 * \i * \r}
                \node at (\i * \r, -0.3) {$1$} ;
                \node at (\j, -0.3) {$4$} ;
              }
              \foreach \i in {4}{
        	\pgfmathsetmacro\j{5 - \i * \r}
        	\link{\i * \r}{0}{\j}{1.25 - 0.5 * \i * \r}
                \node at (\i * \r, -0.3) {\tiny{...}} ;
                \node at (\j, -0.3) {\tiny{...}} ;
              }
              \foreach \i in {9,...,16}{
              \node at (\i * \r, 0) {$.$} ;
              }
              \draw [decorate,decoration={brace,amplitude=4pt}] (3.25,-0.5) -- (1.75,-0.5) ;
              \node at (2.5,-0.85) {$t$} ;
             
              \foreach \i in {-0.2,3.2}{
                \begin{scope}[xshift=\i cm]
                  \draw [decorate,decoration={brace,amplitude=4pt}] (1.9,-0.5) -- (0.1,-0.5) ;
                  \node at (1,-0.85) {$i(m+1)y$} ;
                \end{scope}
              }
        
        \foreach \i in {9,...,16}{
          \node at (\i * \r, -0.3) {$2$} ;
        }
      \end{tikzpicture}
    \subcaption{The literal gadget $\lit{x_i}$.}
    \label{fig:lit-pos-gadget}
    \end{minipage}
    \qquad
    \begin{minipage}{0.28\textwidth}
    \centering
    \begin{tikzpicture}[scale=0.94]
      \def\r{0.2}
            \foreach \i in {0,1,...,3,5,6,...,8}{
      	\pgfmathsetmacro\j{5 - \i * \r}
      	\link{\i * \r}{0}{\j}{1.25 - 0.5 * \i * \r}
              \node at (\i * \r, -0.3) {$1$} ;
              \node at (\j, -0.3) {$4$} ;
            }
            \foreach \i in {4}{
      	\pgfmathsetmacro\j{5 - \i * \r}
      	\link{\i * \r}{0}{\j}{1.25 - 0.5 * \i * \r}
              \node at (\i * \r, -0.3) {\tiny{...}} ;
              \node at (\j, -0.3) {\tiny{...}} ;
            }
            \foreach \i in {9,...,16}{
            \node at (\i * \r, 0) {$.$} ;
            }
            \draw [decorate,decoration={brace,amplitude=4pt}] (3.25,-0.5) -- (1.75,-0.5) ;
            \node at (2.5,-0.85) {$t$} ;
           
            \foreach \i in {-0.2,3.2}{
              \begin{scope}[xshift=\i cm]
                \draw [decorate,decoration={brace,amplitude=4pt}] (1.9,-0.5) -- (0.1,-0.5) ;
                \node at (1,-0.85) {$i(m+1)y$} ;
              \end{scope}
            }
      
      \foreach \i in {9,...,16}{
        \node at (\i * \r, -0.3) {$3$} ;
      }
    \end{tikzpicture}
    \subcaption{The literal gadget $\lit{\neg x_i}$.}
    \label{fig:lit-neg-gadget}
  \end{minipage}
    \caption{The variable and literal gadgets for $x_i$.
      If the ? in $\var{x_i}$ are labeled by 2 (resp.~3) --\emph{setting $x_i$ to true (resp.~false)}--, then $\var{x_i}$ can be entirely rematched to $\lit{\neg x_i}$ (resp.~$\lit{x_i}$).
      Note that $\lit{x_i}$ and $\lit{\neg x_i}$ do not depends of the truth assignment but only if they appear positively of negatively in a clause. }
\label{fig:var-lit-gadget}
\end{figure}
The outermost $jy$ opening parentheses of $\cla{C_j}$ are labeled by 1, forcing the corresponding $jy$ closing parentheses to be labeled by 4.
The next $q$ opening parentheses are labeled by 2, and their matching parentheses are labeled by 3.
The extra $jy$ opening parentheses surrounding $\lit{\ell_a}$ are labeled by \emph{4}, and their matching parentheses, by \emph{1}.
The label of the few remaining parentheses is specified in Figure~\ref{fig:clause-gadget}.

We will refer to the $jy+q$ outermost pairs of parentheses as \emph{the arch of $C_j$}.
We also call the first $jy$ pairs, the \emph{first layer} of the arch, and the next $q$ pairs, the \emph{second layer}.
Let $\mathcal A(q)_j$ denote the set of indices of the second layer in the clause gadget $\cla{C_j}$.
Let $\mathcal A(q)^2_j \subseteq \mathcal A(q)_j$ be the indices of the opening parentheses (labeled by $2$) and $\mathcal A(q)^3_j := \mathcal A(q)_j \setminus \mathcal A(q)^2_j$ be the indices of the closing parentheses (labeled by $3$).
Finally, let $\mathcal A(q) := \bigcup_{j \in [m]} \mathcal A(q)_j$.
\begin{figure*}
  \centering
\begin{tikzpicture}
\def\r{0.2}
\foreach \i in {1,...,6}{
	\pgfmathsetmacro\j{14 - \i * \r}
	\link{\i * \r}{0}{\j}{3 - 0.5 * \i * \r}
        \node at (\i * \r, -0.3) {$1$} ;
        \node at (\j, -0.3) {$4$} ;
}
\foreach \i in {7,...,10}{
	\pgfmathsetmacro\j{14 - \i * \r}
	\link{\i * \r}{0}{\j}{3 - 0.5 * \i * \r}
        \node at (\i * \r, -0.3) {$2$} ;
        \node at (\j, -0.3) {$3$} ;
}

\link{11 * \r}{0}{43 * \r}{1.5}
\link{12 * \r}{0}{26.5 * \r}{0.8}
\link{27.5 * \r}{0}{42 * \r}{0.8}

\link{44 * \r}{0}{59 * \r}{1}
\link{45 * \r}{0}{58 * \r}{0.8}

\node at (3.8,0.1) {$\litalt{j}{\ell_b}$} ;
\node at (6.95,0) {$\lit{\ell_a}$} ;
\node at (10.4,0.1) {$\litalt{j}{\ell_c}$} ;

\node at (11 *   \r, -0.3) {$1$} ;
\node at (12 *   \r, -0.3) {$2$} ;
\node at (26.5 * \r, -0.3) {$3$} ;
\node at (27.5 * \r, -0.3) {$3$} ;

\node at (28.5 * \r, -0.3) {$4$} ;
\node at (29.5 * \r, -0.3) {\tiny{...}} ;
\node at (30.5 * \r, -0.3) {$4$} ;
\node at (31.5 * \r, -0.3) {$2$} ;
\node at (38   * \r, -0.3) {$3$} ;
\node at (39   * \r, -0.3) {$1$} ;
\node at (40   * \r, -0.3) {\tiny{...}} ;
\node at (41   * \r, -0.3) {$1$} ;
\link{28.5 * \r}{0}{41 * \r}{.7}
\link{29.5 * \r}{0}{40 * \r}{.62}
\link{30.5 * \r}{0}{39 * \r}{.53}
\link{31.5 * \r}{0}{38 * \r}{.44}

\node at (42   * \r, -0.3) {$2$} ;
\node at (43   * \r, -0.3) {$4$} ;
\node at (44   * \r, -0.3) {$2$} ;
\node at (45   * \r, -0.3) {$2$} ;
\node at (58   * \r, -0.3) {$3$} ;
\node at (59   * \r, -0.3) {$3$} ;

\foreach \i in {0,12.63}{
  \begin{scope}[xshift=\i cm]
    \draw [decorate,decoration={brace,amplitude=4pt}] (1.25,-0.5) -- (0.1,-0.5) ;
    \node at (0.675,-0.85) {$jy$} ;
  \end{scope}
}

\foreach \i in {0,10.6}{
  \begin{scope}[xshift=\i cm]
    \draw [decorate,decoration={brace,amplitude=4pt}] (2.08,-0.5) -- (1.3,-0.5) ;
    \node at (1.7,-0.85) {$q$} ;
\end{scope}
}

\foreach \i in {0,2.085}{
  \begin{scope}[xshift=\i cm]
    \draw [decorate,decoration={brace,amplitude=4pt}] (6.17,-0.5) -- (5.65,-0.5) ;
    \node at (5.9,-0.85) {$jy$} ;
\end{scope}
}

\end{tikzpicture}
\caption{A $3$-clause gadget $\cla{C_j}$: $\ell_a \lor \ell_b \lor \ell_c$ with $a<b<c$.}
\label{fig:clause-gadget}
\end{figure*}

\paragraph*{Overall construction.}
We join all the gadgets in a binary tree of $\Theta(n+m)=\Theta(n)$ pairs of parentheses and height $\Theta(\log n)$ labeled such as illustrated in Figure~\ref{fig:overall-construction}.
The only requirement on this labeling is that there is no other way of fully matching the binary tree onto itself.
In other words, the labeling should be a design for the structure restricted to the parentheses of the binary tree.
We say that a pair of matched parentheses is \emph{labeled by $i$-$j$} (with $i+j=5$) if the opening parenthesis is labeled by $i$ (implying that the closing parenthesis has to be labeled by $j=5-i$). 
A possibility\footnote{Theorem 1 in \cite{HalesMPS15} shows that there are exponentially many possible labelings for the tree. For more details, see the proof of Lemma~\ref{lem:at-least-one-unpaired} in the present paper.} for labeling the binary tree is to use 1-4 for the outermost pair of parentheses and recursively use 2-3 and 3-2 for the two children of parentheses labeled by 1-4 or 4-1, and use 1-4 and 4-1 for the two children of parentheses labeled by 2-3 or 3-2.
We denote by $\mathcal T$ the set of indices of the letters in this binary tree.
At the ``leaves'' of the binary tree, we place the $n+m$ variable and clause gadgets.
The gadgets $\var{x_1}$, $\var{x_2}$, up to $\var{x_n}$ are placed from left to right. 
For $i \in [n-1]$, we reserve some room in between $\var{x_i}$ and $\var{x_{i+1}}$ for some clause gadgets, according to the following rule.
For each clause $C_j$ on variables $x_a$, $x_b$, $x_c$, with $a < b < c$, we insert the gadget $\cla{C_j}$ somewhere between $\var{x_b}$ and $\var{x_c}$; in other words, to the right of $\var{x_b}$ and to the left of $\var{x_c}$.
Obviously, such an ordering of the variable and clause gadgets can be found in polynomial time.  
The order of the clause gadgets that are between the same two consecutive variable gadgets $\var{x_i}$ and $\var{x_{i+1}}$ is not important and can be chosen arbitrarily.
As $n+m$ need not be a power of 2, there might be, as in Figure~\ref{fig:overall-construction}, some \emph{empty} ``leaves'' without a variable gadget nor a clause gadget.
We will show that the partial sequence $w$ can be extended into a design for the structure $S$ if and only if $\mathcal I$ is satisfiable.

The whole construction can be seen as simulating the following game, equivalent to \3sat, where your opponent has the more interesting role.
There are boxes with $3$ literals written on top of each box.
At the beginning of the game, an opponent, who does not want you to get rich, chooses a truth assignment of the variables appearing on the boxes.
Opening a box costs $2.99$\textcurrency\xspace of your favorite currency \textcurrency\xspace.	
The rules say that you can open at most one box.
Once you open a box, you find inside one object per literal.
You can turn this object into $1$\textcurrency\xspace if the literal is unsatisfied, and the object is worthless otherwise.
Knowing that, you will decide to open a box if and only if its three literals are unsatisfied (and win $3$\textcurrency\xspace$-$ $2.99$\textcurrency\xspace$=0.01$\textcurrency\xspace).
If you open a box with at least one satisfied literal, then you lose at least $2.99$\textcurrency\xspace$-$ $2$\textcurrency\xspace$=0.99$\textcurrency\xspace.
If the formula is satisfiable and your opponent is computationally almighty, he will choose a satisfying assignment.
And you will not win anything: you will decide not to open any box since it has a negative outcome.

In our case, \emph{opening a box by paying 2.99}\textcurrency\xspace corresponds to destroying, in a clause gadget, the $q$ pairs of innermost parentheses surrounding the three literal gadgets (together with some $\Theta(\log n)$ pairs of parentheses in $\mathcal T$ and an additional constant number within the clause gadget); and \emph{turning an object, found inside the box, associated to an unsatisfied literal $\ell_i$ into $1$\textcurrency\xspace} corresponds to fully pairing a variable gadget to a matching literal gadget.

\begin{figure*}
  \centering
\begin{tikzpicture}
\def\r{0.2}
\def\s{0.3}

\link{0}{0}{65 * \r}{10.5 * \s}

\link{1 * \r}{0}{32 * \r}{7 * \s}
\link{33 * \r}{0}{64 * \r}{7 * \s}
 
\link{2 * \r}{0}{16 * \r}{4.5 * \s}
\link{17 * \r}{0}{31 * \r}{4.5 * \s}
\link{34 * \r}{0}{48 * \r}{4.5 * \s}
\link{49 * \r}{0}{63 * \r}{4.5 * \s}

\link{3 * \r}{0}{8.5 * \r}{2.8 * \s}
\link{9.5 * \r}{0}{15 * \r}{2.8 * \s}
\link{18 * \r}{0}{23.5 * \r}{2.8 * \s}
\link{24.5 * \r}{0}{30 * \r}{2.8 * \s}
\link{35 * \r}{0}{40.5 * \r}{2.8 * \s}
\link{41.5 * \r}{0}{47 * \r}{2.8 * \s}
\link{50 * \r}{0}{55.5 * \r}{2.8 * \s}
\link{56.5 * \r}{0}{62 * \r}{2.8 * \s}

\node at (5.75 * \r,0.12) {$\var{x_1}$} ;
\node at (12.25 * \r,0.12) {$\var{x_2}$} ;
\node at (20.75 * \r,0.12) {$\cla{C_1}$} ;
\node at (27.25 * \r,0.12) {$\cla{C_2}$} ;

\begin{scope}[xshift=32 * \r cm]
\node at (5.75 * \r,0.15) {$\var{x_3}$} ;
\node at (12.25 * \r,0.15) {$\cla{C_3}$} ;
\node at (20.75 * \r,0.15) {$\var{x_4}$} ;
\end{scope}

\node at (0 *    \r, -0.3) {$1$} ;
\node at (1 *    \r, -0.3) {$2$} ;
\node at (2 *    \r, -0.3) {$1$} ;
\node at (3 *    \r, -0.3) {$2$} ;
\node at (8.5 *  \r, -0.3) {$3$} ;
\node at (9.5 *  \r, -0.3) {$3$} ;
\node at (15 *   \r, -0.3) {$2$} ;
\node at (16 *   \r, -0.3) {$4$} ;
\node at (17 *   \r, -0.3) {$4$} ;
\node at (18 *   \r, -0.3) {$2$} ;
\node at (23.5 * \r, -0.3) {$3$} ;
\node at (24.5 * \r, -0.3) {$3$} ;
\node at (30   * \r, -0.3) {$2$} ;
\node at (31   * \r, -0.3) {$1$} ;
\node at (32   * \r, -0.3) {$3$} ;
\node at (33   * \r, -0.3) {$3$} ;
\node at (34 *   \r, -0.3) {$1$} ;
\node at (35 *   \r, -0.3) {$2$} ;
\node at (40.5 * \r, -0.3) {$3$} ;
\node at (41.5 * \r, -0.3) {$3$} ;
\node at (47 *   \r, -0.3) {$2$} ;
\node at (48 *   \r, -0.3) {$4$} ;
\node at (49 *   \r, -0.3) {$4$} ;
\node at (50 *   \r, -0.3) {$2$} ;
\node at (55.5 * \r, -0.3) {$3$} ;
\node at (56.5 * \r, -0.3) {$3$} ;
\node at (62   * \r, -0.3) {$2$} ;
\node at (63   * \r, -0.3) {$1$} ;
\node at (64   * \r, -0.3) {$2$} ;
\node at (65   * \r, -0.3) {$4$} ;

\end{tikzpicture}
\caption{The overall picture with 4 variables and 3 clauses. $C_1$ and $C_2$ are on variables $x_1,x_2,x_3$ while $C_3$ is on variables $x_1$ or $x_2$ and variables $x_3,x_4$. }
\label{fig:overall-construction}
\end{figure*}
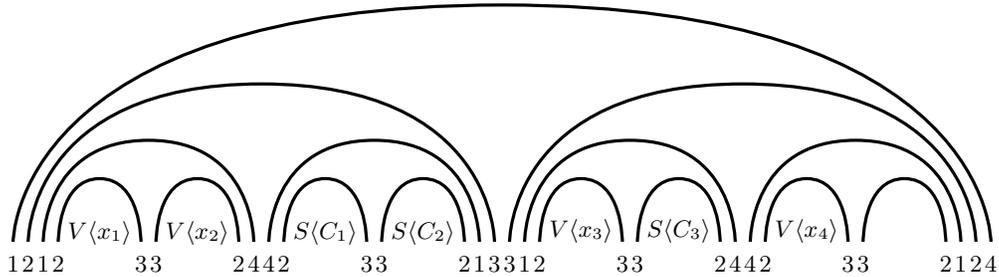

\paragraph*{$\mathcal I$ unsatisfiable implies $\mathcal J$ has no design extension.} 
Assume that instance $\mathcal I$ is not satisfiable. 
By Lemma~\ref{lem:var-gadget}, we already observed that every potential solution corresponds to a truth assignment (via the interpretation: dots to 2 $\equiv$ true, dots to 3 $\equiv$ false).
For any potential solution $w'$, let $A$ be the corresponding variable assignment.  
By assumption, there is a clause which is not satisfied by $A$; suppose it is the clause $C_j = \ell_a \lor \ell_b \lor \ell_c$.

For two structures $S_1,S_2$  compatible with the same sequence $w'$, we say that a parenthesis or a dot at index $u$ in $S_1$ is \emph{rematched in $S_2$} to a parenthesis or a dot  at index $v$ in $S_1$ if $u$ and $v$ are the indices of matching parentheses in $S_2$ (but not in $S_1$).
Similarly, we say that the letter at index $u$ (resp.~the index $u$ itself) is \emph{rematched} to the letter at index $v$ (resp.~the index $v$ itself).
If the nature of the structures $S_1$ and $S_2$ is obvious from the context, we will not precise them. 

We exhibit a structure $S'$ compatible with $w'$ and with more paired letters than $S$ (see Figure~\ref{fig:rematch}).
In this paragraph and the next one, the role of $S_1$ is played by $S$ and the role of $S_2$, by $S'$.
The $jy$ opening parentheses of the first layer of the arch of $C_j$ are rematched to the last $jy$ closing parentheses of the arch of $x_b$, while the $jy$ closing parentheses of the first layer of the arch of $C_j$ are rematched to the first $jy$ opening parentheses of the arch of $x_c$.
The letters whose indices are in $\mathcal A(q)_j$ (second layer) become unpaired.
We fully rematch $\var{x_b}$, $\var{x_a}$, and $\var{x_c}$, to $\litalt{j}{\ell_b}$, $\lit{\ell_a}$, and $\litalt{j}{\ell_c}$, respectively.
It is only possible since all those three literals are unsatisfied by $A$; which means that, for any $i \in \{a,b,c\}$, if the dots in $\var{x_i}$ are labeled by 2 (resp. 3), then the dots in $\lit{\ell_i}$ or $\litalt{j}{\ell_i}$ are labeled by 3 (resp. 2).
So, those dots can be matched with each other.
Observe that the extra arch above $\lit{\ell_a}$ absorbs the first $jy$ opening parentheses of the arch of $x_b$ and the last $jy$ closing parentheses of the arch of $x_c$.
Whereas the first layer of the arch of $\cla{C_j}$ absorbs the last $jy$ closing parentheses of the arch of $x_b$ and the first $jy$ opening parentheses of the arch of $x_c$. 

Rematching those six sets of $t$ consecutive dots incurs a win of $3t$ pairs.
Let us now count the number of pairs in $S$ that we lose.
We have to break at most $6 \lceil \log(n+m) \rceil$ pairs in $\mathcal T$ (indices in the binary tree), so that the six gadgets $\var{x_a}$, $\var{x_b}$, $\var{x_c}$, and $\lit{\ell_a}$, $\litalt{j}{\ell_b}$, $\litalt{j}{\ell_c}$ in $\cla{C_j}$ can be rematched with each other.
Those pairs that we break are all the parentheses in $\mathcal T$ in the paths going from those six gadgets to the root of the binary tree.
Actually removing \emph{all} the parentheses of $\mathcal T$ would still work. 
We also broke the $q=3t-10(n+m)$ pairs of indices in $\mathcal A(q)$, plus the $6$ pairs of parentheses in $\cla{C_j}$ which are not part of an arch.
The rest of $S'$ is matched as in $S$. 
This new structure has at least $3t-(3t-10(n+m)+6) - 6 \lceil \log(n+m) \rceil = 10(n+m) - 6(\lceil \log(n+m) \rceil +1) > 0$ more pairs.
Hence, $S$ partially labeled by $w$ cannot be extended into a design.
\begin{figure*}
  \centering
\begin{tikzpicture}[scale=0.6,yscale=0.9]
  \def\r{0.2}
\foreach \i in {-27,...,-24}{
	\pgfmathsetmacro\j{- 11.5 - \i * \r}
	\link{\i * \r}{0}{\j}{6 + \i * \r - 0.2}
        \node at (\i * \r, -0.3) {\tiny{$4$}} ;
        \node at (\j, -0.3) {\tiny{$1$}} ;
}
\foreach \i in {-20,...,-12}{
	\pgfmathsetmacro\j{- 4 - \i * \r}
	\link{\i * \r}{0}{\j}{- 0.8 * \i * \r - 1.4}
        \node at (\i * \r, -0.3) {\tiny{$1$}} ;
        \node at (\j, -0.3) {\tiny{$4$}} ;
}
\foreach \i in {70,...,82}{
	\pgfmathsetmacro\j{34 - \i * \r}
	\link{\i * \r}{0}{\j}{12 - 0.7 * \i * \r}
        \node at (\i * \r, -0.3) {\tiny{$1$}} ;
        \node at (\j, -0.3) {\tiny{$4$}} ;
}
 
\foreach \i in {2,...,4}{
	\pgfmathsetmacro\j{14 - \i * \r}
	\link{\i * \r}{0}{\j}{4 - 0.7 * \i * \r}
        \node at (\i * \r, -0.3) {\tiny{$1$}} ;
        \node at (\j, -0.3) {\tiny{$4$}} ;
}
\foreach \i in {5,...,10}{
	\pgfmathsetmacro\j{14 - \i * \r}
	\link{\i * \r}{0}{\j}{4 - 0.7 * \i * \r}
        \node at (\i * \r, -0.3) {\tiny{$2$}} ;
        \node at (\j, -0.3) {\tiny{$3$}} ;
}

\link{11 * \r}{0}{43 * \r}{1.7}
\link{12 * \r}{0}{26.5 * \r}{0.8}
\link{27.5 * \r}{0}{42 * \r}{1.15}

\link{44 * \r}{0}{59 * \r}{1}
\link{45 * \r}{0}{58 * \r}{0.8}

\node at (3.8,0.1) {\tiny{$\litalt{j}{\ell_b}$}} ;
\node at (6.95,-0.1) {\tiny{$\lit{\ell_a}$}} ;
\node at (10.4,0.1) {\tiny{$\litalt{j}{\ell_c}$}} ;

\node at (11 *   \r, -0.3) {\tiny{$1$}} ;
\node at (12 *   \r, -0.3) {\tiny{$2$}} ;
\node at (26.5 * \r, -0.3) {\tiny{$3$}} ;
\node at (27.5 * \r, -0.3) {\tiny{$3$}} ;

\node at (28.5 * \r, -0.3) {\tiny{$4$}} ;
\node at (29.5 * \r, -0.3) {\tiny{$4$}} ;
\node at (30.5 * \r, -0.3) {\tiny{$4$}} ;
\node at (31.5 * \r, -0.3) {\tiny{$2$}} ;
\node at (38   * \r, -0.3) {\tiny{$3$}} ;
\node at (39   * \r, -0.3) {\tiny{$1$}} ;
\node at (40   * \r, -0.3) {\tiny{$1$}} ;
\node at (41   * \r, -0.3) {\tiny{$1$}} ;
\link{28.5 * \r}{0}{41 * \r}{1}
\link{29.5 * \r}{0}{40 * \r}{.82}
\link{30.5 * \r}{0}{39 * \r}{.63}
\link{31.5 * \r}{0}{38 * \r}{.44}

\node at (42   * \r, -0.3) {\tiny{$2$}} ;
\node at (43   * \r, -0.3) {\tiny{$4$}} ;
\node at (44   * \r, -0.3) {\tiny{$2$}} ;
\node at (45   * \r, -0.3) {\tiny{$2$}} ;
\node at (58   * \r, -0.3) {\tiny{$3$}} ;
\node at (59   * \r, -0.3) {\tiny{$3$}} ;

\begin{scope}[red,yshift=-0.5cm,yscale=0.8]
  \link{0 * \r }{0}{2 * \r }{-0.1}
  \link{-1 * \r }{0}{3 * \r }{-0.3}
  \link{-2 * \r }{0}{4 * \r }{-0.5}
  \link{-3 * \r }{0}{13 * \r }{-0.8}
  \link{-4 * \r }{0}{13.7 * \r }{-1}
  \link{-5 * \r }{0}{14.4 * \r }{-1.2}
  \link{-6 * \r }{0}{15.1 * \r }{-1.4}
  \link{-7 * \r }{0}{15.8 * \r }{-1.6}
  \link{-8 * \r }{0}{16.5 * \r }{-1.8}

  \link{-10 * \r }{0}{19 * \r }{-2.1}

  \link{-12 * \r }{0}{21.5 * \r }{-2.4}
  \link{-13 * \r }{0}{22.2 * \r }{-2.6}
  \link{-14 * \r }{0}{22.9 * \r }{-2.8}
  \link{-15 * \r }{0}{23.6 * \r }{-3}
  \link{-16 * \r }{0}{24.3 * \r }{-3.2}
  \link{-17 * \r }{0}{25 * \r }{-3.4}
  
    \link{-18 * \r }{0}{28.5 * \r }{-3.7}
    \link{-19 * \r }{0}{29.5 * \r }{-3.9}
    \link{-20 * \r }{0}{30.5 * \r }{-4.1}

    \link{-24 * \r }{0}{32.5 * \r }{-4.5}
    \link{-25 * \r }{0}{33 * \r }{-4.7}
    \link{-26 * \r }{0}{33.5 * \r }{-4.9}
    \link{-27 * \r }{0}{34 * \r }{-5.1}

    \link{-29 * \r }{0}{35 * \r }{-5.4}

    \link{-30.5 * \r }{0}{36 * \r }{-5.7}
    \link{-31.5 * \r }{0}{36.5 * \r }{-5.9}
    \link{-32.5 * \r }{0}{37 * \r }{-6.1}
    \link{-33.5 * \r }{0}{37.5 * \r }{-6.3}

    \link{70 * \r }{0}{68 * \r }{-0.1}
    \link{71 * \r }{0}{67 * \r }{-0.3}
    \link{72 * \r }{0}{66 * \r }{-0.5}

    \link{73 * \r }{0}{57 * \r }{-1}
    \link{74 * \r }{0}{56.5 * \r }{-1.2}
    \link{75 * \r }{0}{56 * \r }{-1.4}
    \link{76 * \r }{0}{55.5 * \r }{-1.6}
    \link{77 * \r }{0}{55 * \r }{-1.8}
    \link{78 * \r }{0}{54.5 * \r }{-2}
    \link{79 * \r }{0}{54 * \r }{-2.2}
    \link{80 * \r }{0}{53.5 * \r }{-2.4}
    \link{81 * \r }{0}{53 * \r }{-2.6}
    \link{82 * \r }{0}{52.5 * \r }{-2.8}

    \link{85 * \r }{0}{51.5 * \r }{-3.1}

    \link{88 * \r }{0}{50.5 * \r }{-3.4}
    \link{89 * \r }{0}{50 * \r }{-3.6}
    \link{90 * \r }{0}{49.5 * \r }{-3.8}
    \link{91 * \r }{0}{49 * \r }{-4}
    \link{92 * \r }{0}{48.5 * \r }{-4.2}
    \link{93 * \r }{0}{48 * \r }{-4.4}
    \link{94 * \r }{0}{47.5 * \r }{-4.6}
    \link{95 * \r }{0}{47 * \r }{-4.8}
    \link{96 * \r }{0}{46.5 * \r }{-5}
    \link{97 * \r }{0}{46 * \r }{-5.2}

    \link{98 * \r }{0}{41 * \r }{-5.5}
    \link{99 * \r }{0}{40 * \r }{-5.7}
    \link{100 * \r }{0}{39 * \r }{-5.9}
\end{scope}

\node at (-29   * \r, 1.5) {\tiny{$\var{x_a}$}} ;
\node at (-29   * \r+0.05, 0) {\tiny{...}} ;

\node at (-9.5   * \r, 2.3) {\tiny{$\var{x_b}$}} ;
\node at (-10   * \r, 0) {\tiny{...}} ;

\node at (85  * \r, 2.7) {\tiny{$\var{x_c}$}} ;
\node at (85   * \r, 0) {\tiny{...}} ;

\node at (36  * \r, 4.2) {\tiny{$\cla{C_j}$}} ;

\end{tikzpicture}
\caption{Suppose a clause $C_j$ on variables $x_a$, $x_b$, and $x_c$, with $a < b < c$, \emph{is not satisfied by the extension $w'$}. In red (light gray) is how we build a structure $S'$ with more pairs than $S$ (in black).}
\label{fig:rematch}
\end{figure*}
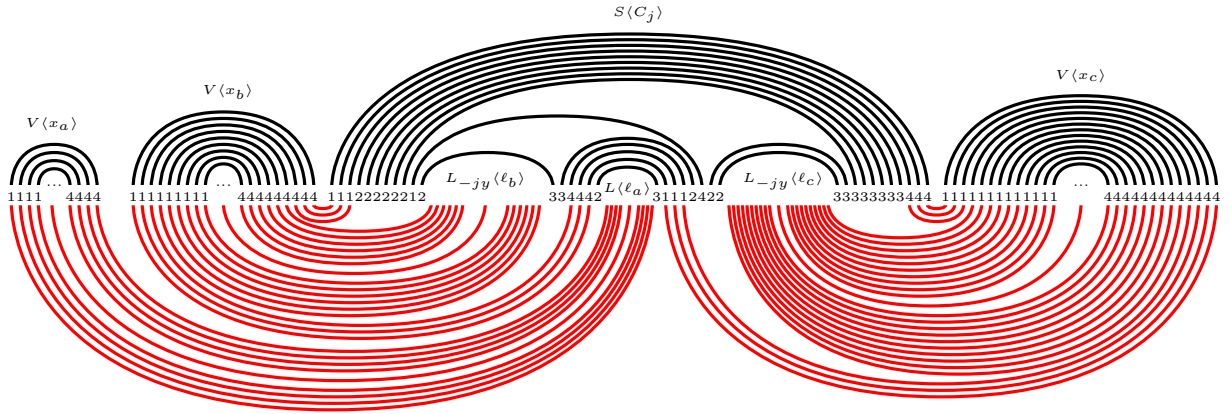
\paragraph*{$\mathcal I$ satisfiable implies $\mathcal J$ has a design extension.} 
Let $A$ be a satisfiable assignment and let $w'$ be the extension of $w$ corresponding to $A$.
We show that $w'$ is a design for $S$.
For the sake of contradiction, assume that there is a structure $S' \neq S$ compatible with $w'$, having at least as many pairs as $S$.
Let us take $S'$ maximally matched, i.e, such that there is no structure $S''$ compatible with $w'$ with strictly more pairs than $S'$.

\begin{lemma}\ifconf[$\bigstar$]\fi\label{lem:at-least-one-unpaired}
$S'$ has to match at least one letter which is unpaired in $S$.
\end{lemma}
\appendixproof{Lemma~\ref{lem:at-least-one-unpaired}}
{
\begin{proof}
  Assume that $S'$ does not match any unpaired letters in $S$.
  As $S'$ has at least the same number of paired letters as $S$, it implies that $S'$ matches \emph{exactly} the same letters (meaning the same indices) as $S$.
  Let $R'$ (resp.~$R$) be the structure obtained by restricting $S'$ (resp.~$S$) to the paired letters of $S$.
  Let $\hat w$ be $w'$ restricted to those paired letters.
  By construction, $R'$ and $R$ are two distinct saturated structures both compatible with $\hat w$.
  In the proof of Theorem 1 in \cite{HalesMPS15}, the authors show that every saturated structure with degree at most $4$ has a design (in fact, many designs); and that this design can be found (in linear time) by a \emph{greedy} labeling.
  The greedy labeling is \emph{any} labeling which does not assign labels $i$-$j$ to a child of paired parentheses labeled by $j$-$i$ and avoids labeling two siblings with the same (oriented) pair $i$-$j$.    
  Observe that $R$ has degree at most $4$ and that $\hat w$, which we fully specified in the above construction, respects those two rules.
  Thus, $\hat w$ is a design for $R$; a contradiction to the existence of $R'$. 
\end{proof}
}

The following lemma is straightforward and proves useful to argue about the quality of a structure reachable from a partially built structure.
It is based on a simple counting argument.
For any $i \in [4]$, we denote by $\#(i,w)$ the number of occurrences of $i$ in the word $w$. 

\begin{lemma}\ifconf[$\bigstar$]\fi\label{lem:imbalance}
If a structure contains $(R)$ as a subtree (that is, the opening and closing parentheses around $R$ match) and the labeling $\hat w$ of $R$ is complete, then, for any $i \in [4]$ and integer $x$, $|\#(i,\hat w) - \#(5-i,\hat w)| > x$ implies that more than $x$ letters will remain unpaired.
\end{lemma}
\appendixproof{Lemma~\ref{lem:imbalance}}
{
\begin{proof}
Let $\hat I$ be the set of (consecutive) indices of letters labeled by $\hat w$. 
Because of the surrounding parentheses, a base with index in $\hat I$ has to be matched with a base with index also in $\hat I$.
If, in $\hat w$, the number of $i$ exceeds the number of $5-i$ by more than $x$, then more than $x$ bases $i$ will not find a pairing $5-i$.
\end{proof}
}

\begin{figure*}
\centering
\begin{tikzpicture}[yscale=1]
\def\r{0.4}
\def\s{0.4}
\def\sp{0.3}
\def\t{3.95}	
\def\n{4}
\foreach \i in {1,...,\n}{
  \pgfmathsetmacro\h{(1 + \i * \i - \i + 3 * \i) * \r}
  \pgfmathsetmacro\k{(\i * \i + \i + 3 * \i + 4) * \r}
  \pgfmathsetmacro\m{\h / 2 + \k / 2}
  \node at (\m,0.12) {$\var{x_\i}$} ;
  \foreach \j in {1,...,\i}{
    \pgfmathsetmacro\d{(1.3 +  0.8 * \i) * (1 - 0.17 * \j)}
    \begin{scope}[color=blue]
   \linkb{\h + \j * \r}{\m}{\k - \j * \r}{\d * \s}   
    \end{scope}
    \node at (\h + \j * \r,-0.2) {$1$} ;
    \node at (\k - \j * \r,-0.2) {$4$} ;
  }
}
\node[fill,circle,red,inner sep=-0.05cm] (a) at (20.5 * \r,0.5) {} ;
\node (ta) at (21 * \r,0.5) {$2$} ;
\draw[red,thick,dashed] (a) to [bend left=20] ++(1.5,0.5) node[right] {\textcolor{black}{$3?$}} ;
\draw[red,thick,dashed] (a) to [bend right=20] ++(-1.5,0.5) node[left] {\textcolor{black}{$3?$}} ;

\begin{scope}[yshift=-4cm]
\coordinate (d1) at (4 * \r,3) {} ;
\coordinate (d2) at (4.5 * \r,3) {} ;
\coordinate (d3) at (5.5 * \r,2) {} ;
\coordinate (d4) at (7.5 * \r,2) {} ;
\coordinate (d5) at (8.5 * \r,3) {} ;
\coordinate (d6) at (9.5 * \r,3) {} ;
\coordinate (d7) at (11.5 * \r,1) {} ;
\coordinate (d8) at (13.5 * \r,1) {} ;
\coordinate (d9) at (15.5 * \r,3) {} ;
\coordinate (d10) at (16.5 * \r,3) {} ;
\coordinate (d11) at (19.5 * \r,0) {} ;
\coordinate (d12) at (21.5 * \r,0) {} ;
\coordinate (d13) at (24.5 * \r,3) {} ;
\coordinate (d14) at (25.5 * \r,3) {} ;
\coordinate (d15) at (28.5 * \r,0) {} ;
\coordinate (d16) at (29.5 * \r,1) {} ;
\coordinate (d17) at (31.5 * \r,1) {} ;
\coordinate (d18) at (32.5 * \r,0) {} ;
\coordinate (d19) at (35.5 * \r,3) {} ;
\coordinate (d20) at (36 * \r,3) {} ;

\foreach \i [count = \j from 2] in {1,...,19}{
\draw[very thick] (d\i) -- (d\j) ;
}
\foreach \i in {0,...,3}{
\node (e\i) at (1,\i) {$\i u$} ;
}
\foreach \i in {1,...,3}{
\draw[very thin,dotted] (1.3,\i) -- (36 * \r,\i) ;
}
\draw[->,thick] (1.3,-0.1) -- (1.3,3.2) ;
\draw[->,thick] (1.2,0) -- (18.5 * \r,0) node[below] {position of the $3$ paired to the $2$ in $\var{x_3}$} -- (36.5 * \r,0) ;

\fill[opacity=0.15] (4 * \r,0) -- (4.5 * \r,0) -- (4.5 * \r,\t) -- (4 * \r,\t) -- cycle ;
\fill[opacity=0.15] (5.5 * \r,0) -- (7.5 * \r,0) -- (7.5 * \r,\t) -- (5.5 * \r,\t) -- cycle ;
\fill[opacity=0.15] (8.5 * \r,0) -- (9.5 * \r,0) -- (9.5 * \r,\t) -- (8.5 * \r,\t) -- cycle ;
\fill[opacity=0.15] (11.5 * \r,0) -- (13.5 * \r,0) -- (13.5 * \r,\t) -- (11.5 * \r,\t) -- cycle ;
\fill[opacity=0.15] (15.5 * \r,0) -- (16.5 * \r,0) -- (16.5 * \r,\t) -- (15.5 * \r,\t) -- cycle ;
\fill[opacity=0.15] (24.5 * \r,0) -- (25.5 * \r,0) -- (25.5 * \r,\t) -- (24.5 * \r,\t) -- cycle ;
\fill[opacity=0.15] (29.5 * \r,0) -- (31.5 * \r,0) -- (31.5 * \r,\t) -- (29.5 * \r,\t) -- cycle ;
\fill[opacity=0.15] (35.5 * \r,0) -- (36 * \r,0) -- (36 * \r,\t) -- (35.5 * \r,\t) -- cycle ;
\end{scope}
\end{tikzpicture}
\caption{Why pairing a letter in $\var{x_i}$, not paired in $S$, to a letter in $\var{x_{i'}}$ with $i \neq i'$ cannot give a sufficiently paired structure. 
A single blue edge represents an arch of thickness $u:=(m+1)y$.
The $y$-axis corresponds to a lower bound of the imbalance $|\#(1,\hat w)-\#(4,\hat w)|$ where $\hat w$ is the subsequence surrounded by the new pair depicted by the red dashed edge.
The gray areas mark positions where a $3$ can actually be present.
In those regions, $|\#(1,\hat w)-\#(4,\hat w)|$ is greater than $u$, so this pairing necessarily yields a worse structure than $S$.}
\label{fig:no-variable-to-variable}
\end{figure*}
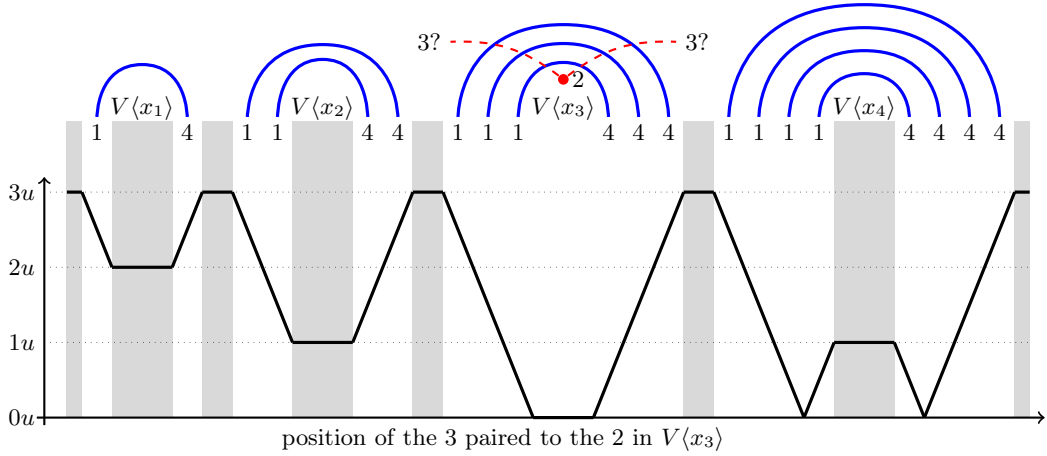
Let $D_i$ be the set of the dots contained in $\var{x_i}$ and in all the occurrences of $\lit{\ell_i}$ and $\litalt{j}{\ell_i}$ (with $\ell_i \in \{x_i,\neg x_i\}$ and some $j \in [m]$).

\begin{lemma}\ifconf[$\bigstar$]\fi\label{lem:xi_to_xi}
In $S'$, a base labeling a dot of $D_i$ can only be matched to a base labeling a dot of~$D_i$.
\end{lemma}
\appendixproof{Lemma~\ref{lem:xi_to_xi}}
{
\begin{proof}
  What is illustrated in Figure~\ref{fig:no-variable-to-variable} is in fact more general, and extends from the variable gadgets to the variable \emph{plus} the literal gadgets.
  Suppose a base labeling a dot of $D_i$ is matched to a base labeling a dot of $D_{i'}$ with $i \neq i'$.
  Then, such a pair would surround a subsequence $\hat w$ in $w'$ with $|\#(1,\hat w)-\#(4,\hat w)| > |i-i'|(m+1)y > (m+1)y = u > y$, provided the two bases do not appear within the same clause gadget.
  Remember that, in the gadget for the clause $C_j$, the extreme literals (corresponding to the second and third literals) are deprived of $jy$ matching parentheses in their arch, while the middle literal (corresponding to the first literal) has an additional $jy$ pairs of parentheses in its arch labeled 4-1 instead of 1-4.
  This explains why $jy$ does not appear in the upper bound of the imbalance.  
  The inequality $|\#(1,\hat w)-\#(4,\hat w)| > y$ holds for the three subcases: variable-variable, variable-clause, and clause-clause (where X-Y says that the first base appears in an X gadget and the second base appears in a distinct Y gadget), and no matter which literal of the clause (first, second, or third) contains the base.

  Now, if a base labeling a dot of $D_i$ in a literal gadget is matched to a base labeling a dot of $D_{i'}$ in the \emph{same} clause gadget $\cla{C_j}$ (but a different literal gadget), then such a pair would surround a subsequence $\hat w$ in $w'$ with $|\#(1,\hat w)-\#(4,\hat w)| > (|i-i'|(m+1)-j)y > |m+1-j|y > y$.

  Finally, if a base labeling a dot of $D_i$ is matched to a base which is \emph{not} labeling a dot of a $D_{i'}$, then the imbalance $|\#(1,\hat w)-\#(4,\hat w)|$, in the word $\hat w$ surrounded by this pair, is larger, for some $j \in [m]$, than $(i(m+1)-j)y > (m+1-m)y > y$.
  
  Recall that $S$ has only $y$ unpaired letters.
  By Lemma~\ref{lem:imbalance}, in all those cases, $S'$ would have at least $y+1$ unpaired letters; a contradiction.
\end{proof}
}

We will apply Lemma~\ref{lem:imbalance} again to argue that the second layer of the clause arches cannot be significantly rematched.
Let $\mathcal T'$ be $\mathcal T$ augmented with the constant number per clause gadget of indices corresponding to parentheses not in any arch.

\begin{lemma}\ifconf[$\bigstar$]\fi\label{lem:arch-A(q)}
In $S'$, a letter with index in $\mathcal A(q)^2_j$ (resp. $\mathcal A(q)^3_j$) can only be matched to a letter of $\mathcal T' \cup \mathcal A(q)^3_j$ (resp. $\mathcal T' \cup \mathcal A(q)^2_j$).  
\end{lemma}
\appendixproof{Lemma~\ref{lem:arch-A(q)}}
{
\begin{proof}
  First, in $S'$, $\mathcal A(q)^2_j$ cannot be rematched to $\mathcal A(q)^3_{j'}$ with $j \neq j'$.
  Such a match would indeed surround a subsequence $\hat w$ in $w'$ with $|\#(1,\hat w)-\#(4,\hat w)| > |j-j'|y$.
  By Lemma~\ref{lem:imbalance} that would imply that $S'$ has strictly fewer pairs than $S$.
  Second, matching a letter with index in $\mathcal A(q)^2_j$ to a 3 outside of $\mathcal T' \cup \bigcup_{j'} \mathcal A(q)^3_{j'}$ would mean to match it to a 3 labeling a dot in $S$.
  For this case, we can conclude similarly to the proof of Lemma~\ref{lem:xi_to_xi}. 
\end{proof}
}

Let $i \in [n]$ be such that a dot of $D_i$ labeled by 2 is matched in $S'$ to a dot of $D_i$ labeled by 3.
By Lemma~\ref{lem:at-least-one-unpaired} and Lemma~\ref{lem:xi_to_xi}, this index exists.
Since $D_i$ contains several literal gadgets but only one variable gadget $\var{x_i}$, at least one endpoint of this pair is in a clause gadget.
Let $j \in [m]$ be the index of this clause.
None of the pairs of parentheses of $\mathcal A(q)_j$ can be present in $S'$; otherwise the matching would cross. 
As $|\mathcal T'|=\Theta(n)$ and $q=\Theta(t)=\Theta(n^2)$, Lemma~\ref{lem:arch-A(q)} implies that most of those $q$ pairs in $S$ are unpaired in $S'$; only a negligible $O(n)$ of the corresponding letters could be rematched in $\mathcal T'$.   

Of the three literals of $C_j$, at most two are not satisfied by $A$.
Let $k \in [n]$ be the index of a satisfied literal in $C_j$.
The number of pairs of parentheses in $S$ destroyed in $S'$ is at least $q-O(n)=3t-O(n)$.
At best, $2t$ ($t$ per unsatified literal) new pairs are formed in $S'$ by linking literal gadgets in $\cla{C_j}$ to the corresponding variable gadgets.
This still incurs a deficit of $t-O(n)$ pairs.
The dots in the literal gadget of $x_k$ in $\cla{C_j}$ have to be rematched, since otherwise $S'$ is not maximal: reversing locally $S'$ to $S$ would provide a structure with strictly more pairs and would not create a crossing.
In other words, the structure $S''$ obtained from $S'$ by replacing the parentheses with at least one endpoint in $\cla{C_j}$ by the parentheses of $S$ in the gadget of $C_j$ and the two variable gadgets corresponding to the unsatisfied literals would have $t-O(n)>0$ more pairs than $S'$.

By Lemma~\ref{lem:xi_to_xi}, the dots in the literal gadget of $x_k$ in $\cla{C_j}$ can only be rematched to another clause gadget $\cla{C_{j'}}$ containing the opposite literal of $x_k$.
By the same argument as for $\cla{C_j}$, this rematching costs at least $3t-O(n)$ parentheses in $\mathcal A(q)_{j'}$ (so $6t-O(n)$ in total).
And only $5t$ new pairs can be obtained; this is the case if the four literals in the clauses $C_j$ and $C_{j'}$ which are not on the variable $x_k$ are unsatisfied and rematched to the corresponding variable gadgets\footnote{Observe that this case is not even possible, since otherwise the clause $C_{j'}$ would not be satisfied by $A$, so the actual deficit of this rematching strategy is even $2t-O(n)$. Although, a deficit of $t-O(n)$ was good enough.}.
The deficit of this rematching is $6t-5t-O(n)=t-O(n)>0$ (since we assumed that $n,m$ are large enough).
Thus reversing $S'$ to $S$ locally (in the two clause gadgets of $C_j$ and $C_{j'}$ and the five variable gadgets) would provide a structure with more pairs than $S'$, contradicting its maximality. 
\end{proof}



\section{Algorithmic results}\label{sec:other}
In this section we show that the trivial $O^*(4^n)$-time algorithm for the \textsc{RNA Design} problem can be significantly improved by analyzing the tree representation of the input sequence. 

Consider a structure $S$ and its tree representation $T$. Let us define two families of subtrees that can be found in $T$ (we follow the notation used by Hale\v{s} et al. \cite{hales:hal-01285499}). 
By $m_5$ we denote a node of degree more than 4. By $m_{3\circ}$ we denote a node with at least one unpaired child, and degree greater than 2. We will use the following result by Hale\v{s} et al. \cite{hales:hal-01285499} (note that both $m_5$ and $m_{3 \circ}$ do not denote a specific subtree, but rather infinite families of subtrees).

\begin{theorem}[Hale\v{s} et al. \cite{hales:hal-01285499}] \label{forbidden}
If $S$ is designable, then it contains neither $m_5$ nor $m_{3\circ}$.
\end{theorem}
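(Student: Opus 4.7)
The plan is a proof by contrapositive: assuming that $S$ contains $m_5$ or $m_{3\circ}$ as a subtree, I will exhibit, for \emph{every} sequence $w$ compatible with $S$, a distinct structure $S' \neq S$ also compatible with $w$ and having at least as many paired letters, so that no such $w$ can be a design. By Lemma~\ref{lem:substructure}, it suffices to work locally, inside the offending subtree.

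For the $m_5$ case, I would argue by pigeonhole on the \emph{oriented outer labels} of paired children. The two outermost parentheses of any paired child are labeled by some $(a, 5-a)$ with $a \in \{1,2,3,4\}$, so only four oriented pairs are possible: $1$-$4$, $4$-$1$, $2$-$3$, and $3$-$2$. A node $v$ of degree at least $5$ has at least four paired children (five when $v$ is the virtual root), hence by pigeonhole two sibling paired children share either the same oriented outer label or a pair of reverse oriented outer labels. In the first case a swap of those two siblings produces a distinct structure with the same number of pairs; in the second case, a local "slide" of one outer parenthesis into the other produces a new nesting that again conserves pairs. A brief case check confirms that each move preserves the pseudoknot-free, well-parenthesized character of $S$.

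For the $m_{3\circ}$ case, let $v$ be a node of degree at least $3$ with an unpaired child $d$ labeled by a base $b$. Then $v$ has at least two non-dot neighbours; at least one of them, call it $(X)$, is a paired child with outermost label $(c, 5-c)$. The plan is to show that for every value of $b$ one can rematch $d$ with either the opening or the closing parenthesis of some sibling paired child (using that $5-b$ must coincide with one of the outermost labels appearing among $v$'s paired children, after a case analysis on $b$ and on the admissible labels of the paired siblings, respecting the constraint, when $v$ itself is paired with label $(e, 5-e)$, that no child can carry the reverse label $(5-e, e)$). The parenthesis freed by this exchange is then re-paired with another neighbour (a second paired child or another dot), the degree hypothesis ensuring enough room for this second move. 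This produces a compatible structure with at least as many paired letters as $S$.

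The main difficulty is the $m_{3\circ}$ case: the argument must work uniformly over every base assignment to $d$ and every choice of outer labels of the sibling paired children, while respecting the parent-child label restriction inherited from $v$. The case analysis is kept manageable by only altering the outermost parentheses of the involved subtrees and leaving their interiors untouched, so that compatibility with $w$ is preserved. Once each case yields an equal-or-better local rematching, the existence of $S' \neq S$ contradicts $w$ being a design, completing the proof.
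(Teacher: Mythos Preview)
The paper does not give its own proof of this statement: Theorem~\ref{forbidden} is quoted from Hale\v{s} et al.\ and used as a black box in Section~\ref{sec:other}. So there is nothing in the present paper to compare your sketch against; I can only assess it on its own terms.

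Your contrapositive strategy and the localisation via Lemma~\ref{lem:substructure} are correct, but both case analyses have genuine gaps.

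For $m_5$, the ``slide'' for two siblings carrying \emph{reverse} oriented outer labels $(a,5{-}a)$ and $(5{-}a,a)$ does not yield an alternative pseudoknot-free structure. With their four outer positions $p_1<p_2<p_3<p_4$ labelled $a,\,5{-}a,\,5{-}a,\,a$, the only compatible re-pairings $(p_1,p_3)$ and $(p_2,p_4)$ cross, while $(p_1,p_4)$ and $(p_2,p_3)$ pair equal bases. The correct patch uses the \emph{parent}, not a sibling: if $v$ is labelled $(e,5{-}e)$ and some child is labelled $(5{-}e,e)$, pair $v$'s opening $e$ with that child's opening $5{-}e$, and the child's closing $e$ with $v$'s closing $5{-}e$; this is a distinct pseudoknot-free structure with the same number of pairs. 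Then for non-root $v$ with at least four paired children, either two share an oriented label (your re-nesting works) or all four oriented labels occur, in particular $(5{-}e,e)$; for the virtual root, five children and four labels force a repeat outright.

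For $m_{3\circ}$, the claim that $5{-}b$ must appear among the outer labels of $v$'s paired \emph{children} is false. Take $v$ labelled $1$--$4$ with two paired children labelled $2$--$3$ and $3$--$2$ and the dot $d$ labelled $1$: no child parenthesis carries a $4$. Here one must match $d$ with one of $v$'s \emph{own} parentheses (the closing $4$), turning $v$'s opening into a dot---same pair count, different structure. Moreover, ``no child can carry the reverse label $(5{-}e,e)$'' is not a constraint you may assume but a case to be dispatched (via the parent-merge above), and re-pairing the freed parenthesis with ``another neighbour'' is both unnecessary (leaving it unpaired already preserves the pair count) and not obviously possible as stated. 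Once you (i) first dispose of any child labelled $(5{-}e,e)$ and of any two equally-labelled children, and (ii) allow $d$ to pair with $v$'s own parentheses as well as with a sibling's, a short check shows that some outer parenthesis among $v$ and its paired children always carries $5{-}b$, and pairing $d$ with it while freeing its former partner gives the required alternative structure.
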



\begin{theorem}\ifconf[$\bigstar$]\fi\label{thm:algo}
\designext can be solved in time:
\begin{compactenum}[(i) ]
\item $\sqrt{3}^n \cdot n^{O(1)}$, where $n$ is the length of the input structure,
\item $2^s \cdot n^{O(1)}$, where $s$ is the number of unlabeled elements in the input structure,
 \end{compactenum}
 using polynomial space.
\end{theorem}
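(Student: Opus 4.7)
Both algorithms begin with the Theorem~\ref{forbidden} preprocessing: any input whose tree $T$ contains an $m_5$ or $m_{3\circ}$ subtree is immediately rejected as non-designable. From now on I may therefore assume that $T$ has maximum degree at most $4$, and every internal node carrying an unpaired child has degree at most $2$; in particular, every unpaired letter lives inside a thin caterpillar region of $T$.

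For part (ii), I plan a depth-first branching over the unlabeled positions, organized according to $T$. For each matched pair $(i,j)$ of parentheses, the constraint $w[i]+w[j]=5$ couples the two endpoints, so if both are already labeled I only verify compatibility; if exactly one is labeled, the other is forced with no branching cost; and if both are unlabeled, I branch into the four joint labelings $(1,4),(4,1),(2,3),(3,2)$, which is a factor $2$ per unlabeled position. For each unpaired unlabeled dot, I branch into $2$ options only: since the dot's parent has degree at most $2$, a local case analysis shows that the only bases compatible with the design property are the two members of one of the pairs $\{1,4\}$ or $\{2,3\}$ (the other two bases would admit a rematching to a nearby compatible base inside the thin caterpillar). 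The branching thus visits at most $2^s$ extensions, each of which is tested for being a design in polynomial time and polynomial space by a variant of the cubic Nussinov--Zuker folding DP augmented with a uniqueness check, as already argued in the proof of Theorem~\ref{thm:main}.

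For part (i), the same branching is applied to all positions of the sequence, not only to the unlabeled ones. Letting $p$ be the number of matched pairs and $d$ the number of unpaired letters, I root $T$ arbitrarily and process it top-down: the root pair admits $4$ labelings, and every other matched pair admits only $3$, by the greedy-labeling argument used in Lemma~\ref{lem:at-least-one-unpaired} which forbids a child pair labeled $(5-i,i)$ under a parent labeled $(i,5-i)$, as otherwise an alternative nesting with the same number of pairs contradicts the design property. This contributes a factor at most $4\cdot 3^{p-1}$. For unpaired dots, the thin-context property from Theorem~\ref{forbidden} restricts the joint enumeration of any $k$ sibling dots under a common parent to $O(3^{k/2})$ admissible labelings (no two siblings may form a new pair, and each choice must avoid a rematch inside its caterpillar), contributing $\sqrt{3}^{\,d}$ overall. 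Multiplying gives $O(3^{p+d/2})=O(\sqrt{3}^{\,n})$ candidates, each again tested in polynomial time and polynomial space by the folding DP.

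The main obstacle I anticipate is obtaining the per-dot branching factor claimed above. The pair-coupling that reduces a matched pair of unlabeled positions from $16$ independent choices to $4$ joint ones is immediate from the pairing constraint, whereas bringing a dot from $4$ down to $2$ bases in part (ii), and to a $\sqrt{3}$ per-position contribution in part (i), requires combining the thin-context consequence of Theorem~\ref{forbidden} with a careful local case analysis that exhibits, for each forbidden labeling, an explicit rematching increasing the number of pairs.
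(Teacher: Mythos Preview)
Your part~(ii) is essentially sound: branching into $4$ joint labelings for a matched pair with both endpoints unlabeled costs $2^2$ against two unlabeled positions, a forced endpoint costs nothing against one unlabeled position, and an unlabeled dot costs $2$ against one unlabeled position; the product is at most $2^s$. The paper's own argument is more refined (it uses only $3$ labelings per pair and forces all sibling dots to share a label) but reaches the same $2^s$ bound, so your route is acceptable here.

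Part~(i), however, has a real gap. Your accounting is $3^{p}\cdot \sqrt{3}^{\,d}=\sqrt{3}^{\,n}$, with the dot factor justified by the claim that $k$ sibling dots under one parent admit $O(3^{k/2})$ labelings. That claim is false at $k=1$: once you enforce ``avoid a rematch with the parent's parentheses'' the dot must take one of the two bases in the complementary pair, giving exactly~$2$ options, and $2>\sqrt{3}$. Hence when the structure contains many paired nodes whose only child is a single dot (which is perfectly consistent with Theorem~\ref{forbidden}), the product over dot groups is $2^{g}$ with $g$ the number of such groups, and $2^{g}$ can exceed $\sqrt{3}^{\,d}$. Concretely, for a structure built from many copies of $(.)$ hung on a small skeleton you get $3$ options for each pair, $2$ for each dot, and the resulting $3^{p}\cdot 2^{d}$ is not bounded by $\sqrt{3}^{\,n}$.

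The paper closes exactly this hole by an amortization step that your plan is missing. Call a paired node \emph{bad} if all its children are unpaired (your problematic case) and \emph{good} if it has at least two paired children. In the tree obtained from $T$ by deleting unpaired nodes and contracting degree-$2$ vertices, bad nodes become leaves and good nodes become internal vertices of branching $2$ or $3$; a simple count then shows that, up to an additive constant, there is at least one good node for every two bad ones. Processing two bad nodes together with their associated good node yields the recursion $F(n)\le 2^{2}\cdot 6\cdot F(n-6)$, i.e.\ $24^{1/6}\approx 1.698$ per position, which is below $\sqrt{3}$. The overall bottleneck then becomes the single-paired-child case ($3$ options for $2$ positions), giving $\sqrt{3}^{\,n}$. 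Without this pairing of bad with good nodes, your per-dot $\sqrt{3}$ budget cannot be met.
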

\appendixproof{Theorem~\ref{thm:algo}}
{
\begin{proof}
Let $S$ be the input structure of length $n$ and with $s$ unlabeled elements. Let $T$ be the tree representation of $S$ and let  $r$ be the virtual root of $T$. 
By \autoref{forbidden} we know that $r$ has at most 4 children, each being either a matching pair of parentheses, or an unpaired letter. In the first step, we branch into all possible labelings of the unlabeled children of $r$.
If there are no more unlabeled nodes, we check in polynomial time if the obtained labeling is a design of $S$, and if it extends the predefined partial labeling.

Consider a non-leaf node $v$ of $T$, which is labeled, but all its children are unlabeled. Note that since $v$ is not a leaf, it corresponds to a pair of matching parentheses.

We now want to branch into all possible labelings of children of $v$. However, in some cases we can prune the search tree, if we know that the current partial solution is not extendable to a design.
By \autoref{forbidden}, we know that there are only very few possibilities of how the children of $v$ look like. Either all of them are paired and then $v$ has at most 3 children (otherwise we obtain $m_5$), or $v$ has some unpaired children and at most one paired child (otherwise we obtain $m_{3\circ}$).
Without loss of generality assume that $v$ is labeled with 1-4 (all other cases are symmetric).

\begin{figure}[h!]
\begin{center}
\begin{tikzpicture}
\def\r{0.15}

\link{0 * \r}{0}{10 * \r}{1}
\link{1 * \r}{0}{3 * \r}{0.6}
\link{4 * \r}{0}{6 * \r}{0.6}
\link{7 * \r}{0}{9 * \r}{0.6}

\altlink{0 * \r}{0}{10 * \r}{0.8}
\altlink{1 * \r}{0}{3 * \r}{0.5}
\altlink{4 * \r}{0}{9 * \r}{0.5}
\altlink{6 * \r}{0}{7 * \r}{0.3}

\node at (0 *   \r, -0.3) {$1$} ;
\node at (10 *   \r, -0.3) {$4$} ;
\node at (1 * \r, -0.3) {$1$} ;
\node at (3 * \r, -0.3) {$4$} ;
\node at (4 * \r, -0.3) {$2$} ;
\node at (6 * \r, -0.3) {$3$} ;
\node at (7 * \r, -0.3) {$2$} ;
\node at (9 * \r, -0.3) {$3$} ;
\end{tikzpicture}
\begin{tikzpicture}
\def\r{0.15}

\link{0 * \r}{0}{10 * \r}{1}
\link{1 * \r}{0}{3 * \r}{0.6}
\link{4 * \r}{0}{6 * \r}{0.6}
\link{7 * \r}{0}{9 * \r}{0.6}

\altlink{0 * \r}{0}{1 * \r}{0.3}
\altlink{3 * \r}{0}{10 * \r}{0.8}
\altlink{4 * \r}{0}{6 * \r}{0.6}
\altlink{7 * \r}{0}{9 * \r}{0.6}

\node at (0 *   \r, -0.3) {$1$} ;
\node at (10 *   \r, -0.3) {$4$} ;
\node at (4 * \r, -0.3) {$2$} ;
\node at (6 * \r, -0.3) {$3$} ;
\node at (1 * \r, -0.3) {$4$} ;
\node at (3 * \r, -0.3) {$1$} ;
\node at (7 * \r, -0.3) {$3$} ;
\node at (9 * \r, -0.3) {$2$} ;
\end{tikzpicture}
\begin{tikzpicture}
\def\r{0.15}

\link{0 * \r}{0}{20 * \r}{1}
\link{9 * \r}{0}{11 * \r}{0.6}
\altlink{0 * \r}{0}{9 * \r}{0.6}
\altlink{11 * \r}{0}{20 * \r}{0.6}
\node at (2 * \r, -0) {$\circ$} ;
\node at (8 * \r, -0) {$\circ$} ;
\node at (5 * \r, -0) {$\cdots$} ;
\node at (12 * \r, -0) {$\circ$} ;
\node at (18 * \r, -0) {$\circ$} ;
\node at (15 * \r, -0) {$\cdots$} ;

\node at (0 * \r, -0.3) {$1$} ;
\node at (20 * \r, -0.3) {$4$} ;
\node at (9 * \r, -0.3) {$4$} ;
\node at (11 * \r, -0.3) {$1$} ;
\end{tikzpicture}
\begin{tikzpicture}
\def\r{0.15}

\link{0 * \r}{0}{10 * \r}{1}
\altlink{5 * \r}{0}{10 * \r}{0.6}

\node at (2 * \r, -0) {$\circ$} ;
\node at (3.5 * \r, -0) {...} ;
\node at (5 * \r, -0) {$\circ$} ;
\node at (6.5 * \r, -0) {...} ;
\node at (8 * \r, -0) {$\circ$} ;

\node at (0 *   \r, -0.3) {$1$} ;
\node at (10 *   \r, -0.3) {$4$} ;
\node at (5 * \r, -0.3) {$1$} ;
\end{tikzpicture}
\begin{tikzpicture}
\def\r{0.15}

\link{0 * \r}{0}{10 * \r}{1}
\altlink{0 * \r}{0}{10 * \r}{0.6}
\altlink{4 * \r}{0}{6 * \r}{0.4}
\node at (2 * \r, -0) {$\circ$} ;
\node at (4 * \r, -0) {$\circ$} ;
\node at (6 * \r, -0) {$\circ$} ;
\node at (8 * \r, -0) {$\circ$} ;

\node at (0 *   \r, -0.3) {$1$} ;
\node at (10 *   \r, -0.3) {$4$} ;
\node at (4 * \r, -0.3) {$2$} ;
\node at (6 * \r, -0.3) {$3$} ;
\end{tikzpicture}
\end{center}

\caption{Some labelings which cannot be extended to a design, because they can be folded into some other structure with at least as many paired letters as in $S$ (the first two pictures correspond to Case I, the third one to Case II, and the last two to Case III).}
\label{fig:wrong-cases}
\end{figure}
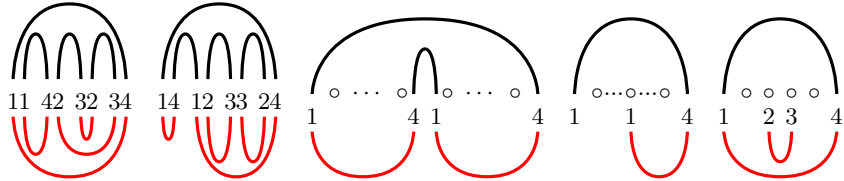


\noindent \textbf{Case I.} First, consider the case that $v$ has $d \leq 3$ children and all of them are paired. It is easy to verify that the only possible labelings of the children of $v$ are: 1-4, 2-3, 3-2 (in any ordering, each of them may appear only once, see \autoref{fig:wrong-cases} for some examples).
This gives the recursion \[F(n) \leq \frac{3!}{(3-d)!} F(n-2d),\]
where $F(n)$ denotes the complexity of the discussed algorithm for the input structure of length $n$.
The worst-case is achieved for $d=1$ and has complexity $F(n) = O^*(\sqrt{3}^n)=O(1.7321^n)$.

\noindent \textbf{Case II.} 
Now consider the case that $v$ has one paired child and $d \geq 1$ unpaired ones. We observe that the paired child of $v$ can be labeled with 1-4 only, while the unpaired children can get either 2 or 3, but all of them must receive the same label (again, see \autoref{fig:wrong-cases}).
This gives the recursion \[F(n) \leq 2 F(n-2-d).\]
The worst case is achieved for $d = 1$ and has complexity $F(n) = O^*(\sqrt[3]{2}^n)=O(1.2600^n)$.

\noindent \textbf{Case III.} 
Finally, consider the case that $v$ has $d \geq 1$ unpaired children and no paired ones. Let us call such a node {\em bad}. We observe that all children of $v$ must receive the same label, either 2 or 3 (again, see \autoref{fig:wrong-cases}), thus the recursion for this case is \[F(n) \leq 2F(n-d),\]
which gives the complexity bound $F(n) = O^*(2^n)$ (achieved for $d=1$).

However, we can show that this case cannot happen too often. We say that a node of $T$, which has at least two paired children, is {\em good}. Let $T'$ be a tree constructed from $T$ by removing all unpaired nodes, and contracting all induced paths into single edges (thus we remove nodes of degree 2).
Moreover, if the virtual root $r$ of $T$ has degree 1, we remove it and assume that $T'$ is rooted at the only child of $r$.
It is easy to observe that $T'$ has the following properties:
\begin{enumerate}[(a)]
\item every node of $T'$ is also a node of $T$,
\item the root of $T'$ has at most 4 children,
\item every inner node of $T'$ has 2 or 3 children,
\item every good vertex of $T$ is an inner node of $T'$,
\item every bad vertex of $T$ is a leaf of $T'$.
\end{enumerate}
Let $z$ be the number of leaves in $T'$, clearly the number of bad vertices in $T$ is at most $z$.
Since every inner node of $T'$ has at most 3 children, we observe that the number of inner nodes in $T'$ is at least $(z-4)/2$, so this gives us a lower bound for the number of good nodes in $T$.
Thus, for every two bad nodes (up to a constant number exceptional ones), there exists a good node (which is not shared with any other pair of bad nodes). More formally speaking, we can partition the set of all but a constant number of bad nodes into a family $A$ of two-elements sets,  and define an injective mapping from $A$ to the set of good nodes of $T$.
So, if we consider labeling children of two bad nodes (with $d_1$ and $d_2$ unpaired children, respectively) and $d \geq 2$ children of a good node at the same time, we obtain the recursion
\[
F(n) \leq 2^2 \cdot 6 \cdot F(n-d_1-d_2-2d).
\] 
The worst case-complexity for this case is achieved for $d_1=d_2=1$ and $d=2$, which gives us $F(n) \leqslant 24F(n-1-1-4)=24F(n-6)$, so $F(n) = O^*(\sqrt[6]{24}^n) = O(1.6984^n)$.
Since there is only a constant number of bad nodes which are not paired with good nodes, the blow-up in complexity is also a constant and can be ignored in $O(\cdot)$-notation.

The correctness of the described procedure is clear and follows from the fact that we only discard labeling which cannot appear in any design.
The running time of the procedure is determined by the complexity of the worst-case branching, which appears in Case I for $d=1$, and thus the running time can be bounded by $F(n) = O^*(\sqrt{3}^n)=O(1.7321^n)$.

Note that the above recursive procedure is completely oblivious to the initial partial labeling of $S$. The only place where we make use of it is the final checking. In our second approach we will only construct partial labelings which extend the pre-labeling.

First, observe that if $S$ has a matching pair and one of its elements is already labeled, the label of the other element is also uniquely determined. Thus we can assume that each node of $T$ is either labeled (i.e., if it is a paired node, then both parentheses are already labeled), or unlabeled.

The cases we consider are the same as in the first algorithm, but now the size of the problem is $s$, the number of unlabeled elements in the input structure. 

\noindent \textbf{Case I.} Let $d$ be the number of paired children of $v$, $p$ of which are unlabeled. We have $1 \leq p \leq d \leq 3$. Considering all cases, we observe that the worst case is achieved for $d=p=1$. It is described by the recursion 
\[
F'(s) \leq 3F'(s-2),
\]
and its complexity is $F'(s) = \sqrt{3}^s \cdot n^{O(1)},$
where $F'(s)$ is the complexity of the discussed algorithm for an input structure with $s$ unlabeled elements.

\noindent \textbf{Case II.} Recall that the labeling of the paired child of $v$ must be the same as the labeling of $v$. Thus, without loss of generality, we can assume that the paired child of $v$ is labeled. Also, if at least one of unpaired children is labeled, we can extend the labeling to all other unlabeled children. Suppose that $v$ has $d \geq 1$ unpaired children We obtain the recursion
\[
F'(s) \leq 2F'(s-d).
\]
Its worst-case is achieved for $d=1$, and the complexity is $F'(s) = 2^s \cdot n^{O(1)}$.

\noindent \textbf{Case III.} This case is analogous to the previous one -- all $d$ unpaired children of $v$ get the same label.
The recursion for this case is  
\[
F'(s) \leq 2F'(s-d),
\]
with the worst case achieved for $d=1$ and the complexity bound $F'(s) = 2^s \cdot n^{O(1)}$.

The correctness is straightforward and the complexity bound for the whole procedure is $F'(s) = 2^s \cdot n^{O(1)}$.
\end{proof}
}
Finally, let us remark that Hale\v{s} et al. \cite{hales:hal-01285499} give a complete characterization of saturated structures (i.e., ones without unpaired elements), which have a design. This characterization implies a polynomial-time algorithm for the \textsc{RNA design} problem on such structures. Using bottom-up dynamic programming on the tree representation on the input structure, we can adapt this procedure to the more general \designext problem.

\begin{observation}
\designext is tractable on saturated structures. \qed
\end{observation}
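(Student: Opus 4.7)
The plan is to leverage the characterization of designable saturated structures by Hale\v{s} et al. \cite{hales:hal-01285499}: a saturated structure $S$ has a design if and only if its tree representation $T$ has degree at most $4$, and in that case \emph{any} labeling satisfying the two \emph{greedy rules} --- (a) no child of a pair labeled $i$-$j$ is itself labeled $j$-$i$, and (b) no two siblings receive the same oriented label $i$-$j$ --- is a design. Conversely, any violation of either local rule exposes an alternative structure with at least as many pairs, hence every design of a saturated structure must be greedy. If $T$ has a node of degree larger than $4$, I would immediately output \no by \autoref{forbidden}; otherwise I run the following dynamic program.

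For each paired node $v$ of $T$, I compute the set $L(v) \subseteq \{1\text{-}4,\,4\text{-}1,\,2\text{-}3,\,3\text{-}2\}$ of labels that one can legally put on $v$ while (i) matching any base already fixed by the partial sequence $w$ at the two parentheses of $v$, and (ii) extending to a greedy labeling of the whole subtree rooted at $v$. At a leaf of $T$ (a paired node with no children), $L(v)$ is the set of the at most $4$ labels consistent with $w$. At an internal paired node $v$ with children $c_1,\ldots,c_k$ (with $k\leqslant 3$, since $T$ has degree at most $4$), a candidate label $\ell$ belongs to $L(v)$ iff $\ell$ is consistent with the pre-labeling of $v$ \emph{and} there exists an injective assignment $\phi\colon\{c_1,\ldots,c_k\}\to\{1\text{-}4,\,4\text{-}1,\,2\text{-}3,\,3\text{-}2\}\setminus\{\text{reverse of }\ell\}$ with $\phi(c_i)\in L(c_i)$ for each $i$. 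This is a bipartite matching on a constant-size bipartite graph, testable in $O(1)$ time.

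The output is \yes iff the virtual root of $T$ admits an analogous injective assignment from its (at most $4$) children into the full set of $4$ labels --- with no ``forbidden reverse'', since the virtual root itself carries no label --- each $c_i$ mapped into its own $L(c_i)$; otherwise, or if some $L(v)$ becomes empty during the pass, the output is \no. Correctness follows from the strict locality of the greedy rules: fixing the label of $v$ decouples the constraints inside the subtree rooted at $v$ from those above it, so $L(v)$ is a complete summary for all later decisions. Since $T$ has $O(n)$ nodes and the work per node is constant, the algorithm runs in linear time in $n$.

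The main subtlety to verify will be that a per-node state made solely of ``which label can $v$ take'' really is sufficient --- one might worry that different witnessing labelings of the subtree below $v$ impose incompatible constraints further up the tree. They do not, precisely because the two greedy rules tie the label of $v$ only to the labels of its parent and of its siblings; they do not tie ancestors of $v$ to descendants of $v$ except through $v$ itself. Hence $L(v)$ captures exactly the information propagated upward, and the DP is both sound and complete.
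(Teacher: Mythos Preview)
Your proposal is correct and follows exactly the approach the paper sketches: the paper merely says one can ``adapt'' the Hale\v{s} et al.\ characterization via ``bottom-up dynamic programming on the tree representation,'' and you have spelled out precisely that DP, including the key (and correct) observation that in a saturated structure the set of designs coincides with the set of greedy labelings, which is what makes the per-node state $L(v)$ sufficient.
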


\ifconf
\else
\section*{Acknowledgments}
Thanks to Valia Mitsou and Yann Ponty for valuable discussions and comments.
\fi



\bibliographystyle{abbrv}
\bibliography{main}

\ifconf
\appendix
\newpage\section{Appendix}
\appendixProofText
\fi

\end{document}